\long\def\remove#1{}
\newtheorem{theorem}{Theorem}[section] 
\newtheorem{lemma}[theorem]{Lemma}
\newtheorem{proposition}[theorem]{Proposition}
\newtheorem{definition}[theorem]{Definition}
\newtheorem{assumption}[theorem]{Assumption}
\newtheorem{newremark}[theorem]{Remark}
\newenvironment{proof}{{\em Proof:}}{\hfill{\hfill\rule{2mm}{2mm}}}
\definecolor{darkred}{rgb}{1, 0.1, 0.3}
\definecolor{darkgreen}{rgb}{0.5, 0.8, 0.1}
\definecolor{darkpurple}{rgb}{1.0, 0, 1.0}
\definecolor{darkblue}{rgb}{0, 0, 1.0}
\newcommand {\mm}[1] {\ifmmode{#1}\else{\mbox{\(#1\)}}\fi}
\newcommand{\eps}{{\varepsilon}}
\newcommand{\vv}		{\tilde{\nor}}
\newcommand{\nor}		{\mathbf{N}}
\newcommand{\inD}		{s}
\newcommand{\bb}		{\beta}
\newcommand{\conv}		{\mm {\rm Conv\,}}
\newcommand{\thetaM}{M_{\theta}}
\newcommand{\homo}	{{\sf H}}
\DeclareMathOperator{\argmin} {\mathrm argmin\,}
\newcommand{\manifold}		{{\mathsf{X}}}
\newcommand{\myflow}		{{\mathsf{F}}}
\newcommand{\lfs}			{{\mathrm{lfs}}}
\newcommand{\newtheta}		{{\tilde{\theta}}}
\newcommand{\cone}		{{c_2}}
\newcommand{\myqueue}	{{\mathbb{Q}}}
\newcommand{\lwfs}              {{\mathrm{wlfs}}}
\newcommand{\lnfs}              {{\mathrm{lnfs}_\beta}}
\newcommand{\lnfsp}             {{\mathrm{lnfs}}}
\newcommand{\aff}                       {{\mathrm{aff}}}
\begin{document}

\title{Parameter-free Topology Inference and Sparsification for Data on Manifolds}

\author{
Tamal K. Dey\thanks{
Department of Computer Science and Engineering,
The Ohio State University, Columbus, OH 43210, USA.
Email: {\tt \{tamaldey,dongzh,yusu\}@cse.ohio-state.edu}
}
\quad\quad
Zhe Dong$^*$
\quad\quad 
Yusu Wang$^*$
}

\date{}
\maketitle

\begin{abstract}
In topology inference from data, current approaches face two major problems.
One concerns the selection of a correct parameter to build 
an appropriate complex on top of the data points;
the other involves with the typical `large' size of this complex.
We address these two issues in the context of inferring homology
from sample points of a smooth manifold of known dimension 
sitting in an Euclidean
space $\mathbb{R}^k$. We show that, for a sample size of $n$ points,
we can identify a set of $O(n^2)$ points 
(as opposed to $O(n^{\lceil \frac{k}{2}\rceil})$ Voronoi vertices)
approximating a subset of the medial axis 
that suffices to compute a distance 
sandwiched between the well known {\em local
feature size} and the local {\em weak feature size}
(in fact, the approximating  set can be further reduced in size to $O(n)$).
This distance, called the {\em lean feature size},
helps pruning the input set at least to the
level of local feature size while making the data locally uniform.
The local uniformity in turn helps in
building a complex for homology inference
on top of the sparsified data without requiring any user-supplied 
distance threshold.
Unlike most topology inference results, ours does not require that the input
is dense relative to a {\em global} feature such as {\em reach}
or {\em weak feature size};
instead it can be adaptive with respect to the local feature size.
We present some empirical evidence in support of our
theoretical claims.
\end{abstract}

\section{Introduction}
\label{sec:intro}
In recent years, considerable progress has been made in
analyzing data for inferring the topology of a space from 
which the data is sampled. Often this process involves building
a complex on top of the data points, and then analyzing the
complex using various mathematical and computational tools
developed in computational topology. There are two main issues
that need attention to make this approach viable in practice. 
The first one stems from the 
requirement of choosing appropriate parameters
to build the complexes so that the provable guarantees align with the
computations.  The other one arises from the
unmanageable `size' of the complex---a problem
compounded by the fact that the input can be large and usual complexes
such as Vietoris-Rips built on top of it can be huge in size. 

In this paper, we address both of the above two issues
with a technique for data sparsification. 
The data points are assumed to be sampled from a smooth
manifold of known dimension sitting in some Euclidean space.
We sparsify the data so that 
the resulting set is locally uniform and is still good for homology inference.
Observe that, with a sample whose density varies with respect to a 
local feature size (such as the $\lfs$ proposed for surface 
reconstruction~\cite{AB98}), no global parameter for building
an appropriate complex can be found. The figure in the next paragraph illustrates this difficulty.

\parpic[r]{\includegraphics[height=2cm]{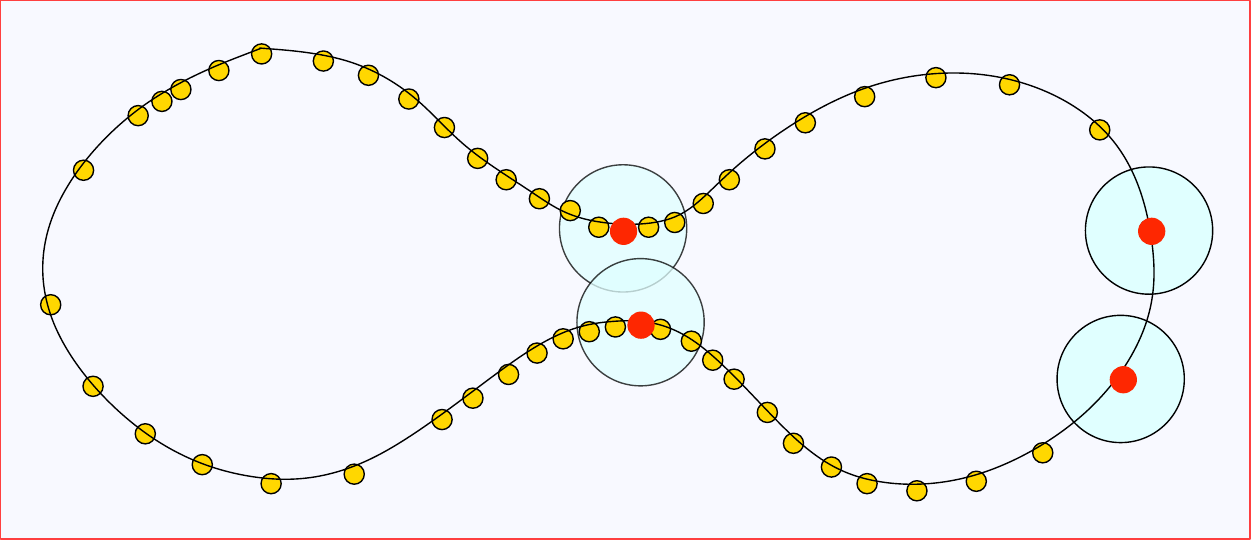}}
For the non-uniformly sampled curve, there is no single radius
that can be chosen to construct, for example, Rips or \v{C}ech complexes.
To connect points in the sparsely sampled part on right, the radius
needs to be bigger than the feature size at the
small neck in the middle. If chosen, this radius destroys
the neck in the middle thus creating spurious topology. 
Our solution to this problem is a sparsification strategy
so that the sample becomes locally uniform~\cite{CJL12,FR02} 
while guaranteeing that no topological information is lost. 
The sparsification is carried out without requiring any extra
parameter and the resulting local uniformity eventually
helps constructing the appropriate complex on top of the
sparsified set without  
requiring any user supplied parameter.

The sparsification also addresses the problem of `size'
because it produces a sub-sample of the original input.
The technique of subsampling 
has been suggested in some of the recent works. The well-known
witness complex builds on the idea of subsampling the input
data by restricting the Delaunay centers on the data points~\cite{SC04}.
Unfortunately, guarantees about topological inference cannot
be achieved with witness complexes unless some non-trivial 
modifications are made and parameters are tuned. 
Sparsified Rips complexes proposed by Sheehy~\cite{Sheehy} also
uses subsampling to
summarize the topological information contained in 
a Rips filtration (a nested sequence). 
The graph induced
complex proposed in~\cite{DFW13} alleviates the `size' problem
even further by replacing the Rips complexes with a more sparsified complex.
Both approaches, however, only approximate the true persistence
diagram and hence to infer homology exactly require a
user-supplied parameter to 
find the `sweet spot' in the filtration range. 
Furthermore, none of these
sparsifications is designed to work with
a non-uniform input that is adaptive to a {\em local} as opposed 
to a {\em global} feature size. 

Our algorithm first identifies a set of points
that supposedly approximates only a subset of the medial axis. 
It is known that the medial axis of a manifold embedded in 
$\mathbb{R}^k$ can be approximated with the Voronoi diagrams
of the $n$ input sample points~\cite{ACK01,CL05,Dey07} which requires 
$\Omega(n^{\lceil \frac{k}{2}\rceil})$ Voronoi vertices in the
worst-case. In contrast, we approximate the medial axis only 
with a {\em lean set} of $O(n^2)$ points
(which can be brought down to $O(n)$ with some more processing as shown 
in Section~\ref{appendix:sec:linearsize}). 
The distance to this lean set which we call the {\em lean feature size}
is shown to be sandwiched between the local feature size $\lfs$ and
the weak local feature size $\lwfs$.
Sparsifying the input with respect to this lean feature size
allows the data to be decimated at least to the level
of $\lfs$, but at the same time keeps it dense enough
with respect to the weak local feature size, which eventually leads to topological
fidelity. This roughly means that the
data is sparsified adaptively as much as possible without
sacrificing the topological information
(see experimental results in Figure~\ref{experiment}). 

The sparsified points are connected in a Rips-like complex
using the lean feature size computed for each sample point.
Following the approach in~\cite{CO08},
the guarantee for topological fidelity is obtained by interleaving
the union of a set of balls with the offsets of the manifold. 
To account for the adaptivity of the sample density, these
offsets are scaled appropriately by the
lean feature size and the approach in~\cite{CO08} is adapted
to this framework.
To the best of our knowledge, this is the first sparsification strategy
that handles adaptive input samples, produces an adaptive as well as a locally uniform
sparsified sample, and infers homology without requiring a threshold
parameter. 

 
\begin{figure*}[ht!]
\begin{center}
\begin{tabular}{c}
        \includegraphics[width=0.95\textwidth]{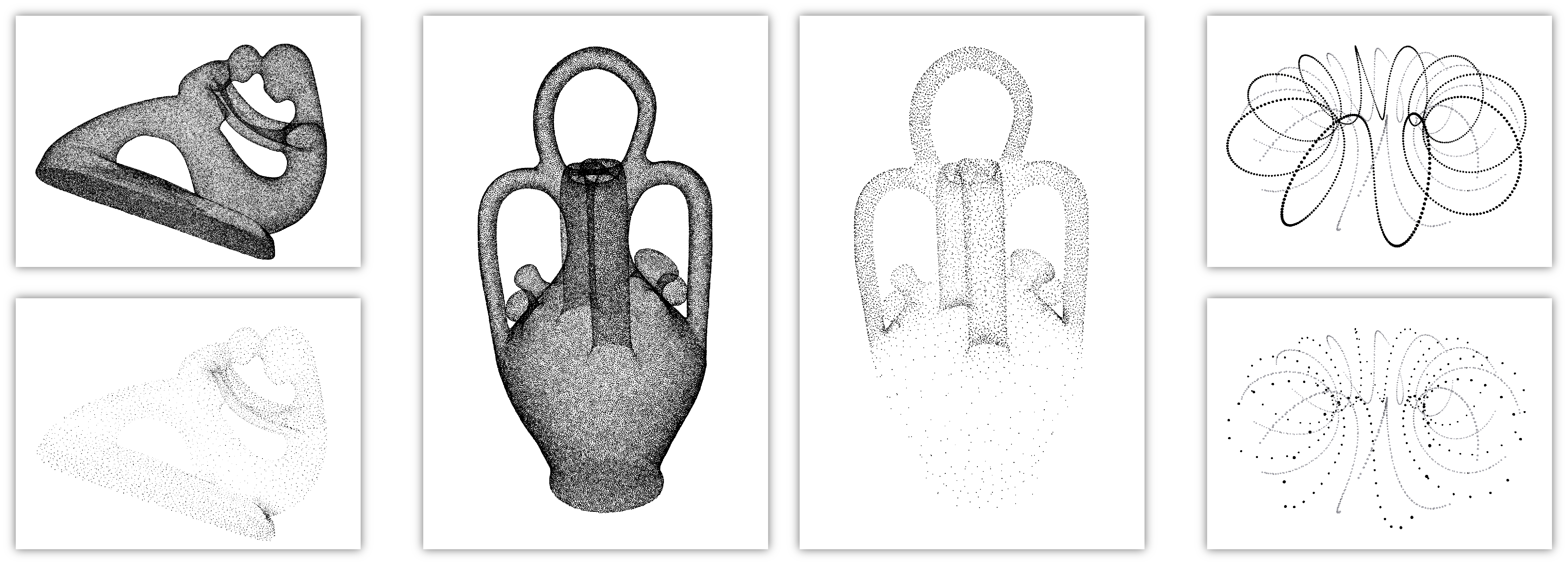}
\end{tabular}
\end{center}
\caption{Sparsification: Original samples of 126500 and
101529 points of 
{\sc MotherChild} and {\sc Botijo} are decimated to
6016 and 8622 points respectively. Betti numbers are computed correctly
by our algorithm (Section~\ref{sec:experiment}). The rightmost picture shows a 3D curve sample (top) 
and the {\it lean set} (bottom)
approximating a relevant subset of the medial axis which 
otherwise spans a much larger subspace of $\mathbb{R}^3$. 
}
\label{experiment}
\end{figure*}

\section{Sparsification}

Let $\manifold$ be a smooth compact manifold embedded in 
a $k$-dimensional ambient Euclidean space $\mathbb{R}^k$. 
Our goal is to 
sparsify a dense and possibly adaptive sample of $\manifold$ 
and still be able to recover homological information of 
$\manifold$ from it. 

\paragraph*{Distance function, feature size, and sample density.}
Let $d(x, A)$ denote the distance between a 
point $x\in \mathbb{R}^k$ and 
its closest point in a compact set $A\subset \mathbb{R}^k$.
Consider the \emph{distance function} $d_\manifold : \mathbb{R}^k \to \mathbb{R}$ defined as $d_\manifold(x) = d(x, \manifold)$. 
Let $\Pi(x) = \{ y \in \manifold \mid d(x, y) = d(x, \manifold)\}$ 
be the set of closest points of $x \in \mathbb{R}^k$ in $\manifold$. 
Notice that, 
for any $y \in \Pi(x)$, the segment $xy$ is contained in the normal space 
$\nor_x$ of 
$\manifold$ at $x$. 
The \emph{medial axis} $M$
of $\manifold$ is the closure of the set of points with at least two 
closest points in $\manifold$, and thus 
$M := {\mathrm closure\,}\{ m\in \mathbb{R}^k \mid |\Pi(m)| \ge 2 \}$.

The \emph{local feature size} at a point $x \in \manifold$, denoted by $\lfs(x)$, is defined as the smallest distance between $x$ and
the medial axis $M$; that is, $\lfs(x) = d(x, M)$~\cite{AB98}. 
There is another feature size definition that is particularly
useful for inferring homological information~\cite{CL07}. 
This feature size is defined as the distance to the critical
points of the distance function $d_\manifold$,
which is not differentiable everywhere. However, one can still define the following vector which extends the
concept of gradient to $d_\manifold$~\cite{Lieu04}.
Specifically, given any point 
$x \in \mathbb{R}^k \setminus \manifold$, let $c(x)$ be the center of 
the unique minimal enclosing ball $B_x$ enclosing $\Pi(x)$.
Define the \emph{gradient vector} at $x$: $\nabla_d(x)  = \frac{x - c(x)}{d(x, \manifold)}$ and the critical points 
$C:=\{x\in\mathbb{R}^k\,|\, \nabla_d(x)=0\}$.
The {\em weak local feature size}
at a point $x\in \manifold$, denoted by $\lwfs(x)$, is defined
as $\lwfs(x)=d(x,C)$.
Given an $\eps$-dense sample w.r.t.~the $\lfs$ which is known
as the $\eps$-sample in the literature~\cite{Dey07}, we would like to
sparsify it to a locally uniform sample 
w.r.t. some function, ideally $\lfs$, or $\lwfs$. 
This motivates the following definition.

\begin{definition}
A discrete sample $P\subset \manifold$ is called $c$-dense
w.r.t. a function 
$\phi:\manifold \rightarrow \mathbb{R}$ if $\forall x\in \manifold$,
$d(x, P)\leq c\cdot \phi(x)$. It is $c$-sparse if 
each pair of distinct points $p,q \in P$
satisfies $d(p,q)\geq c\cdot \phi(p)$. The sample $P$
is called $(c_1,c_2)$-uniform w.r.t. $\phi$ if
it is $c_1$-dense and $c_2$-sparse w.r.t. $\phi$.
\end{definition}

To produce a $(c_1,c_2)$-uniform sample
w.r.t. $\lfs$ or $\lwfs$
one needs to compute $\lfs$ or $\lwfs$ or their approximations.
This in turn needs the computation of at least a subset of the
medial axis or its approximation. One option is to
approximate this set using the Voronoi poles as in
~\cite{AB98,ACK01}. This proposition faces two difficulties.
First of all, it needs  
computing the Voronoi diagram in high dimensions. Second, 
approximating the medial axis may 
require a large number of samples when a manifold of a low co-dimension
is embedded in a high dimensional Euclidean space. 
To overcome this difficulty we propose
to compute a discrete set $L$ near $M$
of small cardinality which helps estimating
the distance to a subset of $M$
(See the curve sample in Figure \ref{experiment} for an example).
The set $L$ called the
{\em lean set} allows us to define an easily computable feature size
which we call {\em lean feature size}.
We show that this feature size is sandwiched between the $\lfs$
and $\lwfs$ thereby enabling us to sparsify an arbitrarily
dense sample to a $(c_1,c_2)$-uniform sample w.r.t. a function
bracketed by $\lfs$ and $\lwfs$. 
The constants $c_1,c_2$ are universal which ultimately 
leads to a parameter-free inference 
of the homology.

From now on, we assume that the input $P$ is a dense 
sample of $\manifold$ in the following adaptive sense~\cite{AB98}.
Each point is also equipped with a normal information as stated
in Assumption~\ref{An-assumption}.
We will see later how this normal information can be computed.
\begin{assumption}
The input point set $P$ is $\eps$-dense
w.r.t. $\lfs$ function on a compact smooth manifold 
$\manifold \subset \mathbb{R}^k$ of known dimension
without boundary. Also, every point
$p\in P$ has an estimated normal space $\vv_p$ where 
$\angle{(\vv_p,\nor_p)}\leq \nu_\eps=O(\eps)$ 
\footnote{We note that $\vv_p$ and $\nor_p$ here are subspaces of $\mathbb{R}^k$. The angle between them refers to the smallest non-zero \emph{principle angle} between these two subspaces as used in the literature. }
(see Section~\ref{sec:computation} for computations of $\vv_p$).
\label{An-assumption}
\end{assumption}

Notice that while we assume the input to be $\eps$-dense w.r.t. $\lfs$,
we do not need to know $\lfs$ and, locally, the sample can be much denser 
and non-uniform. 
Now we define the {\em lean set} with respect to which we define
the lean feature size.
\subsection{Lean set}
\label{sec:leanset}

\begin{definition}
A pair $(p,q)\in P\times P$ is
$\bb$-{\em good} for $0<\beta<\frac{\pi}{2}$ 
if the following two conditions hold:
\begin{enumerate}
\item $\max\{\angle{(\vv_p,pq)},\angle{(\vv_q,pq)}\} \leq \frac{\pi}{2}-\bb$.
\item Let $v=\frac{p+q}{2}$ be the midpoint of $pq$.
The ball $B(v,c_\bb d(p,q))$ does not contain any point of $P$ where
$c_\bb=\frac{1}{3}\tan\frac{\bb}{2}$. 
\end{enumerate}
\label{def:betagood}
\end{definition}
\begin{definition}
The {\em $\bb$-lean set $L_\beta$} is defined as: 
\[
L_\bb= \{ v|\mbox{ $v=\frac{p+q}{2}$ is the mid point of $pq$ where 
$(p,q)$ is a $\bb$-good pair\}.} 
\]
The $\bb$-lean feature size is defined as $\lnfs(x)=d(x,L_\bb)$.
\label{def:betaleanset}
\end{definition}


%

One of our main results is the following property of
the lean feature size ( recall the definition of $\nu_\eps$ in 
Assumption~\ref{An-assumption}). 

\begin{theorem}
Let $\theta,\bb$ be two positive constants
so that $\frac{\pi}{4}\geq \theta\geq \bb+\frac{3}{2}\sqrt\eps+\nu_\eps$ for
a sufficiently small $\eps\leq \frac{1}{8}\sin^2\theta$.
Then,
\begin{enumerate}
\item  
$
\lnfs(x) \le c_1\cdot \lwfs(x) 
$
for any point $x$ in $\manifold$,
\item 
$
\lnfs(p) \geq c_2\cdot \lfs(p)
$
for every point $p\in P$
\end{enumerate}
where
$c_1 = 1 + \cos \theta + \eps$,
$c_2=\frac{2c_0c_\beta}{1+c_0+2c_0c_\beta}$, and
$c_0=\sin(\beta-\nu_\eps)$, $c_\bb=\frac{1}{3}\tan\frac{\bb}{2}$
are positive constants.
\label{lnfs-thm}
\end{theorem}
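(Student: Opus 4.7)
I will prove the two inequalities separately, since they require rather different arguments. Part 2 reduces to a chain of classical lemmas, while Part 1 relies on constructing a $\beta$-good pair anchored at a nearest critical point of $d_\manifold$ to $x$.

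For Part 2, I take an arbitrary $\beta$-good pair $(a,b)\in P\times P$ with midpoint $v$ and will show $\|p-v\|\ge c_2\lfs(p)$ for every $p\in P$. The angle condition of Definition~\ref{def:betagood}, combined with the $\nu_\eps$ error between $\vv_a$ and the true normal, gives $\angle(T_a\manifold, ab)\ge \beta-\nu_\eps$. The standard normal lemma $\sin\angle(T_a\manifold, ab)\le \|a-b\|/(2\lfs(a))$ then yields $\|a-b\|\ge 2c_0\lfs(a)$ with $c_0=\sin(\beta-\nu_\eps)$, and the empty-ball condition gives $\|p-v\|\ge c_\beta\|a-b\|\ge 2c_0 c_\beta\lfs(a)$ for every $p\in P$. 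I then combine this with the $1$-Lipschitz property of $\lfs$ together with $\|a-p\|\le \|a-b\|/2+\|p-v\|$ and $\|a-b\|\le \|p-v\|/c_\beta$; solving the resulting linear inequality for $\|p-v\|$ produces exactly $\frac{2c_0 c_\beta}{1+c_0+2c_0 c_\beta}\lfs(p)=c_2\lfs(p)$.

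For Part 1, let $c$ be a closest critical point of $d_\manifold$ to $x$, so $\|x-c\|=\lwfs(x)$, and set $r=d(c,\manifold)\le \lwfs(x)$ (the inequality follows since $x\in \manifold$). Because $c$ lies in the convex hull of $\Pi(c)$, a simple hyperplane argument produces $y_1,y_2\in \Pi(c)$ with $\angle y_1 c y_2\ge \pi/2\ge 2\theta$; the midpoint $v_y$ of $y_1 y_2$ then satisfies $\|v_y-c\|\le r\cos\theta$, and the isosceles-triangle geometry forces $\angle(\nor_{y_i}, y_1 y_2)=(\pi-\angle y_1 c y_2)/2\le \pi/2-\theta$. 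The $\eps$-density hypothesis provides samples $p_i\in P$ with $\|p_i-y_i\|\le \eps\lfs(y_i)\le \eps r$; a routine perturbation check (using $\|y_1-y_2\|\ge \sqrt 2\, r$ and $O(\eps)$ stability of normals on $\manifold$) shows that all angles involved change by at most $\tfrac{3}{2}\sqrt\eps+\nu_\eps$, which by hypothesis leaves the angle between $\vv_{p_i}$ and $p_1p_2$ below $\pi/2-\beta$. To secure the empty-ball condition I choose, among all pairs $(p,q)$ that satisfy the angle condition and whose midpoint is within the relevant neighborhood of $c$, one of minimum length $\|p-q\|$: any sample in $B(v,c_\beta\|p-q\|)$ would give a strictly shorter pair still meeting the angle requirement (since $c_\beta=\tfrac{1}{3}\tan(\beta/2)$ is small enough that one replacement consumes only part of the $\theta-\beta$ angular slack), contradicting minimality. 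Finally, $\|v-x\|\le \|v-v_y\|+\|v_y-c\|+\|c-x\|\le (\cos\theta+\eps)\lwfs(x)+\lwfs(x)=c_1\lwfs(x)$.

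The main obstacle lies in Part 1, specifically in verifying that the length-minimization step is consistent with both the angle condition and the requirement that the midpoint stay within a $c_1\lwfs(x)$-neighborhood of $x$. One must carefully quantify the angular deflection and midpoint drift produced by an empty-ball replacement, and show that these costs are absorbed by the $\theta-\beta$ slack together with the small parameter $c_\beta$. The hypothesis $\theta\ge \beta+\tfrac{3}{2}\sqrt\eps+\nu_\eps$ together with $\eps\le \tfrac{1}{8}\sin^2\theta$ is precisely what makes the accumulated error small enough for the minimality argument to go through and for the final bound $\|v-x\|\le (1+\cos\theta+\eps)\lwfs(x)$ to hold.
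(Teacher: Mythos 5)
Your Part 2 is correct and is essentially the paper's own argument: the angle condition plus the chord/medial-ball estimate gives $d(a,b)\ge 2c_0\lfs(a)$, the empty-ball condition gives $d(p,v)\ge c_\beta d(a,b)$, and the $1$-Lipschitz property of $\lfs$ closes the computation with exactly the constant $c_2$. Most of Part 1 also follows the paper's skeleton (a wide-angle pair $y_1,y_2\in\Pi(c)$ at a nearest critical point, nearest samples $p_1,p_2$, an angular perturbation of order $\tfrac32\sqrt\eps+\nu_\eps$, then the triangle inequality giving $(1+\cos\theta+\eps)\lwfs(x)$); your hyperplane/separation argument for $\angle y_1cy_2\ge\pi/2$ is a legitimate (even simpler) substitute for the paper's Proposition~\ref{large-angle}.

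The genuine gap is your verification of the empty-ball condition via length minimization. If the minimal pair $(p,q)$ in your class has a sample $w\in B(v,c_\beta d(p,q))$, the replacement pair, say $(p,w)$, need not lie in the class, so minimality yields no contradiction. First, the $\beta$-good angle condition must hold at \emph{both} endpoints, and $\angle(\vv_w,wp)$ is completely uncontrolled: the sheet of $\manifold$ passing through $w$ can be oriented with its normal nearly orthogonal to $pq$, in which case neither $(p,w)$ nor $(w,q)$ is admissible. Second, even at the retained endpoint, $pw$ can deviate from $pq$ by up to $\arcsin\frac{c_\beta}{1/2-c_\beta}$ (about $16^\circ$ for $\beta=\pi/5$), which cannot be charged to the slack $\theta-\beta$: the theorem allows $\theta-\beta=\tfrac32\sqrt\eps+\nu_\eps$, pairs in the class may already sit at the angle threshold, and an unbounded number of exchanges would each need a fresh budget. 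Third, the midpoint of the exchanged pair drifts by roughly $\tfrac14 d(p,q)$, so it can leave the ``relevant neighborhood of $c$'' that your final distance bound needs. The repair is to drop the minimality detour and check the empty ball directly for the very pair $(p_1,p_2)$ you constructed: with $2\tilde\theta=\angle y_1cy_2\ge\pi/2$ and $r=d(c,\manifold)$, the open ball $B(c,r)$ misses $\manifold\supseteq P$, while $d(v,c)\le r(\cos\tilde\theta+\eps)$ and $d(p_1,p_2)\le 2r(\sin\tilde\theta+\eps)$, so $B(v,c_\beta d(p_1,p_2))\subset B(c,r)$ as soon as $c_\beta\le\frac{1-\cos\tilde\theta-\eps}{2(\sin\tilde\theta+\eps)}$; by monotonicity of $x\mapsto\frac{1-\cos x-\eps}{\sin x+\eps}$ and $\tilde\theta\ge\theta\ge\beta$, $\eps\le\tfrac18\sin^2\theta$, this holds for $c_\beta=\tfrac13\tan\tfrac\beta2$ --- this is precisely the computation the paper carries out in Appendix~\ref{appendix:A}.
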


The upper bound follows from 
Proposition~\ref{med-approx} which shows a stronger
result that
$\lnfs$ is bounded from above by the distance to
a subset of the medial axis characterized by an angle
condition. This set also contains
all critical points of the distance function
$d_\manifold$. First, we establish this result. 

\begin{definition}
The {\em $\theta$-medial axis} $\thetaM\subseteq M$ of $\manifold$ is defined as 
the set of points $m\in M$ 
where there exist two points 
$x,y\in \Pi(m)$ such that $\angle{xmy}\geq 2\theta$. 
\end{definition}
We will see later that the concept of
$\theta$-medial axis is also used as a bridge between
geometry and topology for our inference result. Our algorithm 
\emph{does not} approximate $M_\theta$, but rather, approximates
the distances to it by the the \emph{lean set}.

\begin{proposition}
Let $\theta,\bb$ be two positive constants
so that 
$\frac{\pi}{2}\geq \theta\geq \bb+\frac{3}{2}\sqrt\eps+\nu_\eps$ for
a sufficiently small $\eps\leq \frac{1}{8}\sin^2\theta$.
Let $x$ be any point in $\manifold$.
Then, 
$
\lnfs(x)=d(x, L_\bb) \le c\cdot d(x, M_\theta)
$
where
$c = 1 + \cos \theta + \eps$ is a positive constant.
\label{med-approx}
\end{proposition}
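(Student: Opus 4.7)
The plan is as follows. Let $D := d(x, M_\theta)$ and let $m \in M_\theta$ realize this distance; by definition of $M_\theta$ there exist $u, v \in \Pi(m)$ with $\angle umv \ge 2\theta$. Set $r := \|m-u\| = \|m-v\|$, so $r \le D$ (since $x \in \manifold$) and $\lfs(u), \lfs(v) \le r$ (since $m \in M$). By Assumption~\ref{An-assumption}, pick $p, q \in P$ with $\|p-u\|, \|q-v\| \le \eps r$, and set $y := (u+v)/2$ and $v' := (p+q)/2$. I would verify that $(p,q)$ satisfies both conditions of Definition~\ref{def:betagood}, so that $v' \in L_\beta$, and then bound $\|x-v'\|$ by $cD$.

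For the angle condition, I would apply a triangle inequality for angles,
\[
\angle(\vv_p, pq) \le \angle(\vv_p, \nor_p) + \angle(\nor_p, \nor_u) + \angle(\nor_u, v-u) + \angle(v-u, q-p),
\]
and bound each term. The first is $\le \nu_\eps$ by Assumption~\ref{An-assumption}. A direct computation with $\hat u := (u-m)/r \in \nor_u$ gives $(v-u)\cdot\hat u = -2r\sin^2(\angle umv/2)$ and $\|v-u\| = 2r\sin(\angle umv/2)$, yielding $\angle(\nor_u, v-u) \le \pi/2 - \angle umv/2 \le \pi/2 - \theta$. The remaining two terms, $\angle(\nor_p, \nor_u)$ (standard tangent-space smoothness under $\|p-u\| \le \eps\lfs(u)$) and $\angle(v-u, q-p) = O(\eps/\sin\theta)$ (from $\|(q-p)-(v-u)\| \le 2\eps r$ and $\|v-u\| \ge 2r\sin\theta$), together contribute at most $\tfrac{3}{2}\sqrt\eps$ under $\eps \le \tfrac18\sin^2\theta$. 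The hypothesis $\theta \ge \beta + \tfrac{3}{2}\sqrt\eps + \nu_\eps$ then yields $\angle(\vv_p, pq) \le \pi/2 - \beta$, and symmetrically for $q$.

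For the empty-ball condition, it suffices to show $B(v', c_\beta\|pq\|) \cap \manifold = \emptyset$, since $P \subset \manifold$. Because $u, v \in \Pi(m)$, the open ball $B(m, r)$ misses $\manifold$, and $\|m-y\| = r\cos(\angle umv/2) < r$ places $y$ inside, so $d(y, \manifold) \ge r(1-\cos(\angle umv/2)) \ge 2r\sin^2(\theta/2)$ and hence $d(v', \manifold) \ge 2r\sin^2(\theta/2) - \eps r$. The radius satisfies $c_\beta\|pq\| \le c_\beta(2r\sin(\angle umv/2) + 2\eps r)$; substituting $c_\beta = \tfrac{1}{3}\tan(\beta/2)$, expanding via the half-angle identity, and using $\beta \le \theta \le \angle umv/2$, the main term on the right is at most $\tfrac{2}{3}$ of the main term of $d(v', \manifold)$, and the $\eps$-tails are absorbed because $\eps \le \tfrac18\sin^2\theta \le \tfrac12\sin^2(\theta/2)$.

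Finally, the triangle inequality gives
\[
\|x-v'\| \le \|x-m\| + \|m-y\| + \|y-v'\| \le D + r\cos\theta + \eps r \le (1+\cos\theta+\eps)\,D = c\,D,
\]
using $\|m-y\| = r\cos(\angle umv/2) \le r\cos\theta$, $\|y-v'\| \le \tfrac{1}{2}(\|p-u\|+\|q-v\|) \le \eps r$, and $r \le D$. The main obstacle is the empty-ball step: the ball radius (of order $\beta r$) must fit inside the manifold-free gap at $v'$ (of order $r\sin^2(\theta/2)$), and the $O(\eps)$ perturbations from using $p,q$ instead of $u,v$ must not destroy the inequality. This balance is exactly what the calibrated hypotheses $\eps \le \tfrac18\sin^2\theta$ and $\theta \ge \beta + \tfrac{3}{2}\sqrt\eps + \nu_\eps$ are designed to achieve.
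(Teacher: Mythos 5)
Your proposal follows essentially the same route as the paper's proof: take the nearest point $m\in M_\theta$, its wide-angle pair of closest points on $\manifold$, their nearest samples $p,q$, verify that $(p,q)$ is $\beta$-good (angle condition via an angle chain using $\theta\geq \beta+\frac{3}{2}\sqrt{\eps}+\nu_\eps$, empty-ball condition via the medial ball at $m$ together with monotonicity in the half-angle $\angle umv/2\geq\theta$), and conclude with the same triangle-inequality chain yielding $(1+\cos\theta+\eps)\,d(x,m)$. The differences are only cosmetic (a vector-perturbation bound for $\angle(pq,uv)$ in place of the paper's two inscribed-angle estimates, and slightly different but still valid bookkeeping of the $\eps$-tails), so your argument matches the paper's.
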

\begin{proof}
Let $m = \argmin d(x, M_\theta)$. By definition, we have
a pair of points $s,t$ in the manifold $\manifold$ so that the line segments $sm$ and
$tm$ subtends an angle larger than or equal to $2\theta$ and
both $sm$ and $tm$ are normal to $\manifold$ at $s$ and $t$ respectively.
Let $p\in P$ and $q\in P$ be the nearest sample points to $s$ 
and $t$ respectively. 
By the $\eps$-sampling condition of $P$, we have that $d(p, s) \le \eps \lfs(s)$ and thus $\angle{(\nor_s, \nor_p)} \le \eps$. 

In Appendix~\ref{appendix:A}, we show that the pair $(p, q)$ is $\beta$-good, hence its midpoint $\frac{p+q}{2}$ belongs to $L_\beta$. 
Notice that 
$\max \{ \lfs(s), \lfs(t)\} \le d(s, m) = d(t, m)$, and due to the $\eps$-sampling condition, $d(\frac{p+q}{2}, \frac{s+t}{2}) \le \eps d(s,m)$. We then have: 
\begin{align}
& d(\frac{p+q}{2}, m) \le d(\frac{s+t}{2}, m) + d(\frac{p+q}{2}, \frac{s+t}{2}) \le (\cos \theta + \eps) d(s, m); \nonumber \\
\Rightarrow~~ &d(x,L_\beta) \le d(x, \frac{p+q}{2}) \le d(x, m) + d(m, \frac{p+q}{2}) 
\leq d(x, m)+ d(s,m)(\cos\theta +\eps).
\label{eqn:A}
\end{align}
Since $s$ is a closest point of $m$ in $\manifold$, 
we have $d(s,m) = d(m, \manifold) \leq d(x,m)$. 
Combining this with Eqn (\ref{eqn:A}), it follows that
$$
d(x,L_\bb)\leq 
(1+\cos\theta+\eps) \cdot d(x, m).
$$
\end{proof}

We bound the distance $d(x,M_\theta)$ with $\lwfs(x)$ by observing
the following. 
The critical points of a distance function 
$d:\mathbb{R}^k\rightarrow \mathbb{R}$  
can be characterized by points $x\in \mathbb{R}^k$
that have the zero gradient $\nabla d$ along every
unit vector originating at $x$; see Grove~\cite{Grove}. 
It is also known that the critical points of the distance function 
$d_\manifold$ lie in the medial axis $M$. They are
points $m\in M$ so that the convex hull $\conv (\Pi(m))$
of all nearest neighbors of $m$ in $\manifold$ contains 
$m$. This means that there exists a pair of points $x,y$
in $\Pi(m)$ so that the angle $\angle{xmy}$ is large.
We use this angle condition to avoid the critical points.
Specifically, we show the following result for manifolds of arbitrary codimension which helps to 
make the angle condition precise. 

\begin{proposition}
Let the ambient dimension $k\geq 1$ and $m\in M$ be a critical point of the distance function 
$d_\manifold$. There exists a pair of points $x,y \in \Pi(m)$ so that
$\angle{xmy}\geq \frac{\pi}{2}$.
\label{large-angle}
\end{proposition}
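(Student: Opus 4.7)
I would unravel the definition of critical point into a geometric condition on $\Pi(m)$, and then reduce the angle claim to a convex hull / inner product argument that works in arbitrary ambient dimension.

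\textbf{Step 1 (translate criticality).} Since $m$ is critical, $\nabla_d(m) = \frac{m - c(m)}{d(m,\manifold)} = 0$, so $m = c(m)$, the center of the minimum enclosing ball $B_m$ of $\Pi(m)$. Every point of $\Pi(m)$ lies at distance exactly $d(m,\manifold)$ from $m = c(m)$, so $\Pi(m)$ lies on the boundary sphere of $B_m$ and the radius of $B_m$ equals $d(m,\manifold)$. Invoking the classical characterization of the smallest enclosing ball (its center lies in the convex hull of the points of the set that touch the boundary sphere), I conclude
\[
m \in \conv(\Pi(m)).
\]

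\textbf{Step 2 (angle extraction).} Set $s_y := y - m$ for $y \in \Pi(m)$; all $s_y$ have common norm $r := d(m,\manifold) > 0$, and $0 \in \conv\{s_y : y \in \Pi(m)\}$. I want to exhibit $x,y \in \Pi(m)$ with $\langle s_x, s_y\rangle \le 0$, which is exactly $\angle xmy \ge \pi/2$. Suppose for contradiction that $\langle s_x, s_y\rangle > 0$ for every pair $x,y \in \Pi(m)$. By Carath\'eodory's theorem, write $0 = \sum_{i=1}^{N} \lambda_i s_{y_i}$ with $\lambda_i \ge 0$ and $\sum_i \lambda_i = 1$. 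Fixing any index $j$ with $\lambda_j > 0$ and taking the inner product with $s_{y_j}$, I get
\[
0 = \lambda_j r^2 + \sum_{i \ne j} \lambda_i \langle s_{y_i}, s_{y_j}\rangle,
\]
a sum of non-negative terms with the first one strictly positive, a contradiction. Hence some pair $x,y \in \Pi(m)$ satisfies $\langle s_x, s_y\rangle \le 0$, i.e., $\angle xmy \ge \pi/2$.

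\textbf{Main obstacle.} The only delicate point is Step 1: one must justify $m \in \conv(\Pi(m))$ carefully, making sure that the minimum enclosing ball argument is valid for the (possibly infinite) compact set $\Pi(m)$ in arbitrary ambient dimension $k$, and that the radius of this ball really equals $d(m,\manifold)$ rather than something smaller. Once that geometric fact is nailed down, the angle extraction in Step 2 is short and dimension-free, since it uses only the pigeonhole-style Carath\'eodory argument rather than any planar picture. No smoothness or codimension restriction is needed beyond what is already present in the setup.
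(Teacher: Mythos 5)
Your proof is correct, and it takes a genuinely different route from the paper's. Both arguments start from the same fact, namely that a critical point satisfies $m \in \conv(\Pi(m))$ (the paper simply cites this as known, while you rederive it from the paper's definition of $\nabla_d$ via the minimal enclosing ball; your derivation is sound, since all of $\Pi(m)$ lies on the boundary sphere of $B_m$ and the center of a minimal enclosing ball lies in the convex hull of the boundary-touching points). After that the paper argues geometrically: it intersects the polytope $C=\conv(\Pi(m))$ with a $2$-flat through $m$ to find a polygon pair $u,v$ with $\angle umv\ge \pi/2$, and then iteratively slides $u$ and $v$ along faces of $C$ to vertices of $C$ (which are extreme points, hence in $\Pi(m)$) while preserving the angle bound. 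Your argument replaces all of this with a short linear-algebra step: writing $0=\sum_i\lambda_i(y_i-m)$ by Carath\'eodory and pairing with any $y_j-m$ having $\lambda_j>0$ shows that not all inner products $\langle x-m,\,y-m\rangle$ can be positive, which is exactly $\angle xmy\ge\pi/2$ for some pair (automatically distinct, since $\langle x-m,\,x-m\rangle=r^2>0$, with $r=d(m,\manifold)>0$ because the gradient is only defined off $\manifold$). Your version is more elementary, manifestly dimension-free, and avoids the somewhat delicate face-sliding step of the paper; the paper's version, in exchange, produces the witnesses as vertices of the hull and is the kind of argument that could be sharpened to a dimension-dependent angle bound, as the paper's remark suggests.
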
 
\begin{proof}
It is known that any critical point $m$ of the distance function
$d_\manifold$ is in
the convex hull $C=\conv \Pi(m)$ of the points in $\Pi(m)$.
This convex hull $C$ is a $j$-polytope for some $j\leq k$.
We can assume that $j$ is at least $2$,
because otherwise, $C$ is an edge with endpoints
say $x, y \in \Pi(m)$, and $\angle{xmy}=\pi \geq \frac{\pi}{2}$.

Now consider the subspace $\mathbb{R}^j \subseteq \mathbb{R}^k$ that
contains the $j$-polytope $C$.
Choose an arbitrary $2$-flat $H$ passing through $m$ in this $\mathbb{R}^j$.
The intersection of $H$ and $C$ is a polygon that contains $m$.
There is at least a pair of vertices $u$, $v$
of this polygon so that $\pi\ge\angle{umv}\geq \frac{\pi}{2}$.
The vertices $u$ and $v$ are the intersection of the $2$-flat with
the two codimension-2 faces $U$ and $V$ of $C$
respectively which are $(j-2)$-faces.

\parpic[r]{\includegraphics[height=2.5cm]{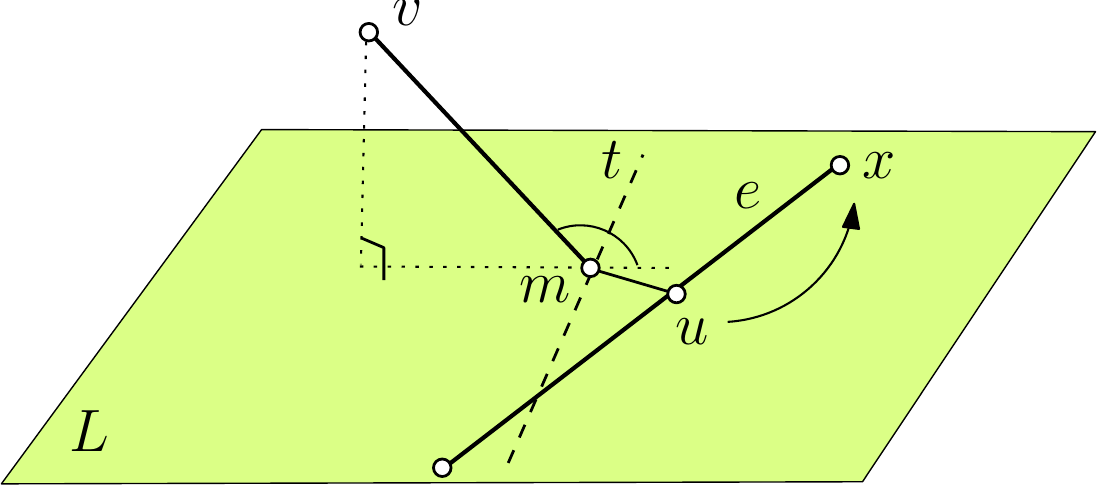}}
Let $e$ be the maximal line segment contained
in $U$ that connects $u$ and a vertex of $U$.
We can show that, one can choose an endpoint, say $x$, of $e$ so that
the angle $\angle{umv}$ remains at least $\frac{\pi}{2}$ when
$u$ assumes the position of $x$.
To see this consider the plane $L$ spanned by the line of $e$ and the
point $m$ (see figure on the right). Let $t$ be the line perpendicular
to the orthogonal projection of $mv$. Observe that all points $z\in e$
makes an angle $\angle{zmv}$ of at least $\frac{\pi}{2}$ if
$z$ lies in the halfplane of $L$ delimited by $t$ which does not contain
the projection of $mv$. Then, one of the endpoints of $e$ must satisfy
this condition because $u\in e$ does so to ensure
$\angle{umv}\geq \frac{\pi}{2}$.

The chosen endpoint $x$ of $e$ is  either
a vertex of $C$ or a point in a lower dimensional face of $U$.
Keeping $u$ at $x$, we can let $v$ coincide with a similar
endpoint of a line segment in $V$ while keeping the angle
$\angle{umv}$ at least $\frac{\pi}{2}$. Therefore, continuing this
process, $u$ and $v$ either reach a vertex of $C$ or a
lower dimensional face. It follows that both will reach a vertex
of $C$ eventually while keeping the angle
$\angle{umv}\ge \frac{\pi}{2}$. These two vertices qualify for $x$ and $y$
in the proposition.
\end{proof}

\begin{newremark}
We remark that the above bound of $\frac{\pi}{2}$
can be further tightened with a term
depending on the dimension $k$. However, the bound of $\frac{\pi}{2}$
suffices for our results. 
\end{newremark}
The following assertion is now immediate.
\begin{proposition}
For $\theta\leq \frac{\pi}{4}$, every point $x\in \manifold$ satisfies
$d(x,M_\theta)\leq \lwfs(x)$.
\label{crit-avoid}
\end{proposition}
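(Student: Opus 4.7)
The plan is to show this as an almost immediate corollary of Proposition~\ref{large-angle}. The key observation is that Proposition~\ref{large-angle} guarantees that every critical point $m$ of $d_\manifold$ admits a pair $x,y \in \Pi(m)$ with $\angle xmy \ge \frac{\pi}{2}$. Since we assume $\theta \le \frac{\pi}{4}$, we have $2\theta \le \frac{\pi}{2}$, which means the defining angle condition for $M_\theta$ (namely $\angle xmy \ge 2\theta$) is automatically satisfied by every critical point. Hence the critical set $C$ is contained in $M_\theta$.

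With the inclusion $C \subseteq M_\theta$ in hand, distances can only shrink when we enlarge the target set, so for every $x \in \manifold$ one has $d(x, M_\theta) \le d(x, C) = \lwfs(x)$, which is exactly the claim.

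There is essentially no obstacle here: the entire geometric content is already encapsulated in the earlier Proposition~\ref{large-angle}, and the step from ``critical point has a pair at angle $\ge \pi/2$'' to ``critical point lies in $M_\theta$ when $\theta \le \pi/4$'' is a direct comparison of angles, followed by the monotonicity of the distance function under inclusion of subsets. Thus the proof will consist of two short sentences identifying the inclusion $C \subseteq M_\theta$ and then invoking monotonicity of $d(x,\cdot)$ to conclude.
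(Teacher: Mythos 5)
Your proposal is correct and is exactly the argument the paper intends: the paper states the proposition is ``immediate'' after Proposition~\ref{large-angle}, and your two steps (critical points lie in $M$ and satisfy the angle condition $\angle xmy \ge \frac{\pi}{2} \ge 2\theta$, hence $C \subseteq M_\theta$; then $d(x,M_\theta) \le d(x,C) = \lwfs(x)$ by monotonicity of distance under set inclusion) are precisely that omitted reasoning. No gaps.
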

Propositions~\ref{med-approx} and \ref{crit-avoid} together
proves the upper bound of the $\lnfs$ claimed in
Theorem~\ref{lnfs-thm}.
Next, we show the lower bound.

\begin{proposition}
For every sample point $p\in P$, we have
$\lnfs(p)> \cone \cdot \lfs(p)$ where
$\cone = \frac{2c_0 c_\bb}{1+c_0 + 2c_0 c_\bb}$ and $c_0=\sin(\beta-\nu_\eps)$.
\label{lb-to-med}
\end{proposition}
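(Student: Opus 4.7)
The plan is to fix an arbitrary element $v = \tfrac{q_1+q_2}{2}\in L_\beta$ arising from a $\beta$-good pair $(q_1,q_2)$, set $r := d(p,v)$ and $L := d(q_1,q_2)$, and show $r \ge c_2\,\lfs(p)$; taking the infimum over all such $v$ then yields the stated lower bound on $\lnfs(p)$, with strict inequality traceable to the open-ball exclusion in condition (2) of $\beta$-goodness. The three ingredients I intend to combine are a lower bound on $L$ in terms of $\lfs(q_1)$ from the angle condition (1), the upper bound $L \le r/c_\beta$ from condition (2) applied to the point $p\in P$, and the $1$-Lipschitz property of $\lfs$ to transfer $\lfs(q_1)$ back to $\lfs(p)$.

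First, I would use condition (1) of $\beta$-goodness together with the $\nu_\eps$-accuracy of the estimated normal $\vv_{q_1}$ to conclude that the chord $q_1q_2$ makes angle at least $\beta-\nu_\eps$ with the true tangent space $T_{q_1}\manifold$. Explicitly, $\angle(\vv_{q_1},q_1q_2)\le \tfrac{\pi}{2}-\beta$ combined with $\angle(\vv_{q_1},\nor_{q_1})\le\nu_\eps$ gives $\angle(\nor_{q_1},q_1q_2)\le \tfrac{\pi}{2}-(\beta-\nu_\eps)$, and the complementarity of the angles a vector makes with two orthogonal subspaces forces $\angle(T_{q_1}\manifold,q_1q_2)\ge \beta-\nu_\eps$. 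Then the standard empty-tangent-ball property at $q_1$---the open ball of radius $\lfs(q_1)$ centered at $q_1+\lfs(q_1)\hat n$, with $\hat n$ the unit direction of the normal component of $q_2-q_1$, contains no point of $\manifold$---yields by a one-line expansion the chord-tangent inequality $L\ge 2\sin(\beta-\nu_\eps)\lfs(q_1) = 2c_0\lfs(q_1)$, uniformly in codimension.

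Next, the triangle inequality $d(p,q_1)\le d(p,v)+d(v,q_1)=r+L/2$ together with the Lipschitz bound $\lfs(q_1)\ge \lfs(p)-d(p,q_1)$ turns the above into $(1+c_0)L\ge 2c_0(\lfs(p)-r)$. Condition (2) of $\beta$-goodness asserts $B(v,c_\beta L)\cap P=\emptyset$, so $L\le r/c_\beta$. Eliminating $L$ between these two inequalities and solving for $r$ yields exactly
\[
r \;\ge\; \frac{2c_0 c_\beta}{1+c_0+2c_0 c_\beta}\,\lfs(p) \;=\; c_2\,\lfs(p),
\]
as claimed.

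The main technical subtlety lies in step one: carefully relating the angle $\angle(\vv_{q_1},q_1q_2)$ between a vector and the estimated normal subspace to $\angle(T_{q_1}\manifold,q_1q_2)$ in arbitrary codimension using the $\nu_\eps$ bound, and verifying that the empty-tangent-ball property---and the resulting chord-tangent inequality $|q_1q_2|\ge 2\sin(\alpha)\lfs(q_1)$---holds uniformly across codimensions (by centering the tangent ball in one carefully chosen normal direction, rather than relying on two-sided medial balls as one does for hypersurfaces). Once these geometric inputs are secured, the remainder is a short algebraic manipulation.
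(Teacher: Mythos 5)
Your proposal is correct and follows essentially the same route as the paper's proof: the chord length of a $\beta$-good pair is bounded below by $2c_0\lfs(q_1)$ via the angle condition plus an empty tangent/medial ball at $q_1$, the empty-ball condition (2) gives $d(q_1,q_2)\le d(p,v)/c_\beta$, and the $1$-Lipschitz property of $\lfs$ closes the algebra to the same constant $c_2$. The only cosmetic differences (arguing for an arbitrary lean point and taking the infimum rather than the nearest one, and using the tangent ball of radius $\lfs(q_1)$ instead of the medial ball along the same normal direction) do not change the argument.
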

\begin{proof}
Let $z$ be the nearest point to $p$ in $L_\bb$, and $(p',q')$ the $\bb$-good pair that gives rise to $z$ (thus $z$ is the midpoint of $p'q'$). 
By definition of a $\beta$-good pair, 
$\angle{(\vv_{p'},p'q'))} \leq \frac{\pi}{2}-\beta$ and
hence $\angle{(\nor_{p'},p'q')}\leq \frac{\pi}{2}-\beta +\nu_\eps$.
There is a medial ball $B$ tangent to the manifold $\manifold$ at $p'$ so that the half line
$p'o$ going through the center $o$ of this ball $B$ realizes
the angle $\angle{(\nor_{p'},p'q')}$. Hence, 
$\angle{op'q'}\leq \frac{\pi}{2}-\beta +\nu_\eps$. 
It follows that
\begin{equation}
d(p',z) = \frac{1}{2}d(p',q')\geq d(p',o)\cos(\frac{\pi}{2}-\beta+\nu_\eps)
\geq c_0\cdot \lfs(p'), 
\mbox{ where $c_0=\sin(\beta-\nu_\eps)$.}
\label{eqn:B}\end{equation}
The empty ball condition of the $\bb$-good pair means that $2 c_\bb d(p', z) \le d(p, z)$, that is, $d(p', z) \le \frac{d(p,z)}{2c_\bb}$. It then follows that
$$d(p, p') \le d(p, z) + d(p', z) \le (1 + \frac{1}{2c_\bb}) d(p, z). $$
By the $1$-Lipschitz property of the $\lfs$ function and Eqn (\ref{eqn:B}), we have: 
\begin{align*}
\lfs(p) &\le \lfs(p') + d(p, p') \le \lfs(p') + (1 + \frac{1}{2c_\bb}) d(p, z) \le \frac{1}{c_0} d(p', z) + (1 + \frac{1}{2c_\bb}) d(p, z) \\
&\le \frac{1}{2c_0 c_\bb} d(p,z) + (1 + \frac{1}{2c_\bb}) d(p, z) = (1 + \frac{1}{2c_\bb} + \frac{1}{2c_0 c_\bb}) \cdot d(p,z). 
\end{align*}
Setting $\cone = \frac{1}{1 + \frac{1}{2c_\bb} + \frac{1}{2c_0 c_\bb}} = \frac{2c_0 c_\bb}{1 + c_0 + 2c_0 c_\bb}$, we have that $d(p,z) = \lnfs(p) \ge \cone \cdot \lfs(p)$, which proves the proposition. 
\end{proof}

We will see later that, $\beta$ is fixed at a constant value of $\frac{\pi}{5}$.
For this choice of $\beta$, $c_2$ is not unusually small.

\subsection{Computations for sparsification}
\label{sec:computation}

In this section we describe the algorithm {\sc Lean} 
that takes a standard $\eps$-dense sample $P$
w.r.t. $\lfs$ of a hidden 
manifold $\manifold \subset \mathbb{R}^k$ of known intrinsic
dimension, and outputs 
a \emph{sparsified set} $Q\subseteq P$. 
The set $Q$ is both adaptive and locally uniform as 
stated afterward in Theorem~\ref{thm:sparsification}. 
The parameter $\rho$ is chosen later to be a fixed constant
less than $1$.

\begin{algorithm}[h!]
\floatname{algorithm}{Algorithm}
\caption{{{\sc Lean}($P$,~$\beta$,~$\rho$)}}
\label{alg:lean}
\begin{algorithmic}[1]
        \STATE $L_\beta:=\emptyset$; 
        \FOR{every pair $(p,q)\in P\times P$}
                \STATE if $(p,q)$ is a $\beta$-good pair 
                then $L_\beta:=L_\beta \cup \{\frac{p+q}{2}\}$
	\ENDFOR
	\STATE Put $P$ in a max priority queue $\myqueue$ with 
priority $\lnfs(p)$
for $p\in P$;
	\WHILE{$\myqueue$ not empty }
		\STATE $q:=$extract-max($\myqueue$); ~~$Q:=Q\cup \{q\}$; 
                \STATE delete any $p$ from $\myqueue$ if $d(q,p)\leq \rho\lnfs(q)$ 
        \ENDWHILE
        \end{algorithmic}
\label{alg:Lean}
\end{algorithm}

The sparsification is based on the lean set $L_\bb$, which is computed in lines 2--4 of the algorithm.
We note that checking whether a pair $(p,q)$ is $\bb$-good or 
not requires 
no parameter other than $\bb$, which is set to a fixed constant
$\frac{\pi}{5}$ later in the homology inference algorithm. 
Clearly, $|L_\beta|=O(|P|^2)$ (see Section~\ref{appendix:sec:linearsize} for improving $|L_\beta|$
to $O(|P|)$).
There is one implementation detail which involves the
estimation of the normal space $\vv_p$ for every point $p \in P$. 
This estimation step is oblivious to 
any parameter but requires the
intrinsic dimension $s$ of $\manifold$ to be known. 

We estimate the tangent space $T_p$ 
(thus the normal space) of $\manifold$ at a point $p \in P$ as follows. 
Let $\inD$ be the intrinsic dimension of the manifold $\manifold$. 
Let $p_1 \in P$ be the nearest neighbor of $p$ in $P \setminus \{p\}$. 
Suppose we have already obtained points $\sigma_i = \{p, p_1, \ldots, p_i\}$ with $i < \inD$. 
Let $\aff(\sigma_i)$ denote the affine hull of the points in $\sigma_i$. 
Next, we choose $p_{i+1}\in P$ that is closest to $p$ 
among all points forming an angle within the range 
$[\frac{\pi}{2} - \frac{\pi}{5}, \frac{\pi}{2}]$ with $\aff(\sigma_i)$. 
We add $p_{i+1}$ to the set and obtain $\sigma_{i+1} = \{p, p_1, \ldots, p_i, p_{i+1}\}$. 
This process is repeated until $i+1 = \inD$, the dimension of $\manifold$,
at which point we have obtained $\inD+1$ points $\sigma_\inD = \{p, p_1, \ldots, p_{\inD}\}$. 
We use $\aff(\sigma_\inD)$ to approximate the tangent space $T_p$. 
It turns out that the simplex $\sigma_\inD$ obtained this way  has good thickness property, which by Corollary 2.6 in \cite{BG11} 
implies that the angle between the tangent space and the estimated 
tangent space at $p$ (thus also the angle between the normal space and 
the estimated normal space at $p$) is bounded by $O(\eps)$. 
The big-$O$ hides terms depending only on the intrinsic property 
of the manifold. 
See Appendix \ref{appendix:normal} for details. 
In other words, we have that the error $\nu_\eps$ in the estimated normal 
spaces (as required in Assumption \ref{An-assumption}) is $O(\eps)$.

Next, we put the points in $P$ in 
a priority queue and process them
in the non-decreasing order of their distances 
to $L_\beta$. We iteratively remove the point $q$ with
maximum value of $d(q,L_\beta)$ from the queue and proceed
as follows. We put $q$ into the sparse set $Q$ and
delete any point from the queue that lies at a distance
of at most $\rho\lnfs(q)$
from $q$. Since we consider points in non-decreasing
order of their distances to $L_\beta$, no earlier point
that is already in the sparse set $Q$ can be
deleted by this process. 

Determining if a pair $(p,q)$ is $\beta$-good takes $O(|P|)$ time.
This linear complexity is mainly due to the range queries for balls 
required for testing the `empty ball' condition $2$ for $\beta$-goodness. 
Therefore, for
$L_\beta=O(|P|^2)$, the algorithm spends $O(|P|^3)$ time in total.
This can be slightly improved to $O(|P|^{2-\frac{1}{k}}2^{O(\log^*|P|})$ using 
general spherical range query data structure in the ambient space $\mathbb{R}^k$~\cite{Agarwal}. 
Once the lean set is computed, the computation of $\lnfsp$
for all points involves computing the nearest
neighbor in $L_\beta$ for each point $p\in P$.
Using the method described in section~\ref{appendix:sec:linearsize}, we
can bring down the lean set size to $O(|P|)$. Then,
computing $\lnfs$ takes at most $O(|P|^2)$ time in total.
The actual sparsification in steps 6-9 takes only 
$O(|Q|^2)=O(|P|^2)$ time.

We show that the decimation by {\sc Lean} leaves the point
set $Q$ locally uniform w.r.t. $\lnfs$. 
The proof appears in 
Appendix~\ref{appendix:A}.
\begin{theorem}
Let $P$ be a sample 
of a manifold $\manifold \subseteq \mathbb{R}^k$, 
which is $\eps$-dense
w.r.t. $\lfs$.
For $\rho\le\frac{1}{12}$, the output of ${\sc Lean}(P,\beta,\rho)$ is a 
$(\frac{4}{3}\rho, \rho)$-uniform
sample of $\manifold$ w.r.t. $\lnfsp_{\beta}$ when $\eps>0$ 
is sufficiently small. 
\label{thm:sparsification}
\end{theorem}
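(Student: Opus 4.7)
The plan is to establish the two uniformity conditions separately. Sparsity is essentially immediate from the algorithm: the while-loop extracts from $\myqueue$ in non-increasing order of the priority $\lnfs$, so for any two distinct $q_1,q_2\in Q$ with $q_1$ extracted first we have $\lnfs(q_1)\ge\lnfs(q_2)$; since $q_2$ survived the deletion step when $q_1$ entered $Q$, we get $d(q_1,q_2)>\rho\lnfs(q_1)\ge\rho\lnfs(q_2)$, and the sparsity condition holds for both orderings of the pair.

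For density, fix $x\in\manifold$ and use the $\eps$-sampling assumption to pick $p\in P$ with $d(x,p)\le\eps\lfs(x)$. If $p\in Q$, set $q:=p$; otherwise $p$ must have been deleted by some $q\in Q$ extracted earlier, so $\lnfs(q)\ge\lnfs(p)$ and $d(p,q)\le\rho\lnfs(q)$. In either situation the triangle inequality gives $d(x,q)\le\eps\lfs(x)+\rho\lnfs(q)$, and the $1$-Lipschitz property of $\lnfs$ supplies $\lnfs(q)\le\lnfs(x)+d(x,q)$. Substituting and rearranging yields
\[
(1-\rho)\,d(x,q)\le \eps\lfs(x)+\rho\lnfs(x).
\]

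To finish I need to bound $\lfs(x)$ by a multiple of $\lnfs(x)$. The only direct comparison available is Proposition~\ref{lb-to-med}, which holds only on sample points. Chaining together the $1$-Lipschitz property of $\lfs$, the inequality $\lnfs(p)\ge c_2\lfs(p)$ applied at the $\eps$-close $p$, and the $1$-Lipschitz property of $\lnfs$, a short computation gives $\lfs(x)\le \lnfs(x)/(c_2-O(\eps))$. Plugging this into the display produces
\[
d(x,q)\le \frac{\rho+O(\eps)}{1-\rho}\,\lnfs(x),
\]
and at $\rho=\tfrac{1}{12}$ the leading constant $\rho/(1-\rho)=\tfrac{1}{11}$ is strictly smaller than $\tfrac{4\rho}{3}=\tfrac{1}{9}$, leaving a gap of $\tfrac{2}{99}$ that absorbs the $O(\eps)$ error as soon as $\eps$ is sufficiently small (roughly $\eps\lesssim c_2/55$). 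The inequality $\rho/(1-\rho)\le\tfrac{4\rho}{3}$ in fact holds for every $\rho\le \tfrac{1}{4}$, so the assumption $\rho\le \tfrac{1}{12}$ leaves comfortable slack.

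The main obstacle is precisely the Lipschitz transfer: Proposition~\ref{lb-to-med} is stated only on $P$, and extending the comparison between $\lfs$ and $\lnfs$ to arbitrary points of $\manifold$ introduces an $O(\eps)$ error that has to be tracked carefully through the final inequality. Everything else is bookkeeping with the extraction order and two applications of the triangle inequality; the quantitative choice $\rho\le \tfrac{1}{12}$ appears precisely to guarantee enough headroom over $\rho/(1-\rho)$ to absorb that error.
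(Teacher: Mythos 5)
Your proposal is correct and takes essentially the same route as the paper's own proof: both rest on Proposition~\ref{lb-to-med} applied at the nearest sample point, the decimation radius $\rho\lnfs(q)$ of the point that deleted it, the extraction order from the max-priority queue, and the $1$-Lipschitz properties of $\lfs$ and $\lnfs$, with $\eps$ taken small relative to $\rho$ and $c_2$. The only difference is bookkeeping: the paper first records the density as a $(\tfrac{6}{5}\rho, L_\bb)$-sampling condition measured at the retained point $q$ (its Eqn.~(\ref{eqn12}), which it reuses later for homology inference) and then converts to $\tfrac{4}{3}\rho$-density w.r.t.\ $\lnfs$ via Remark~\ref{lnfs-remark}, whereas you perform the Lipschitz transfer directly at the manifold point $x$ — which proves the stated theorem equally well.
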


\subsection{Linear-size Lean Set}
\label{appendix:sec:linearsize}

\newcommand{\reducedL} {{\widehat{L}}}

Observe that, the size $|L_\beta|$ is $O(n^2)$ if the input sample
$P$ has size $n$. This is far less than
$O(n^{\lceil\frac{k}{2}\rceil})$, $k$ being the ambient dimension,
which one incurs if the
medial axis is approximated with the Voronoi diagrams~\cite{CL05,Dey07}.
We can further thin
down the lean set to a linear size $O(n)$ for any fixed $k$ by the following simple strategy:

For every $p\in P$, among all $\bb$-good pairs $(p, q)$ it forms, we choose the pair $(p, q^*)$ such that the distance $d(p, q^*)$ is the smallest. We call this pair $(p, q^*)$ the \emph{minimal $\bb$-good pair} for $p$.
We now take a reduced lean set, denoted by $\reducedL_\beta$, as the collection of midpoints of these minimal $\bb$-good pairs. Obviously, $|\reducedL_\bb| = O(n)$.

Below we show that this reduced lean set can replace the original lean set $L_\bb$: it only worsens the distance from a sample point to the lean set by an additional constant factor. Note that this is the only distance in the end required by the algorithm (and the homology inference in Theorem \ref{thm:topoinfo-alg}). In particular, we have the following result.

\begin{lemma}
For any point $p \in P$, we have that $\lnfs(p) \le d(p, \reducedL_\bb) \le (1 + \frac{1}{c_\bb}) \lnfs(p).$
\label{lem:reducedLeanset}
\end{lemma}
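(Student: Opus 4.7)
The plan is to handle the two inequalities separately. The lower bound $\lnfs(p) \le d(p, \reducedL_\beta)$ is immediate from the containment $\reducedL_\beta \subseteq L_\beta$: every minimal $\beta$-good pair is in particular a $\beta$-good pair, so passing from $L_\beta$ to its subset can only increase the distance from $p$. No further work is required for this direction.

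For the upper bound, I would start by choosing a witness $z \in L_\beta$ with $d(p,z) = \lnfs(p)$, and letting $(p', q')$ be the $\beta$-good pair whose midpoint is $z$. I would then let $(p', q^*)$ be the minimal $\beta$-good pair for $p'$, and $z^* \in \reducedL_\beta$ its midpoint. By the defining property of minimality, $d(p', z^*) = \tfrac{1}{2}\, d(p', q^*) \le \tfrac{1}{2}\, d(p', q') = d(p', z)$, so bounding the whole route $p \to p' \to z^*$ reduces to bounding $d(p, p')$ in terms of $d(p, z) = \lnfs(p)$.

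The key step, and essentially the only real content of the proof, is to invoke the empty ball condition from Definition~\ref{def:betagood}: the ball $B(z, c_\beta d(p', q'))$ contains no point of $P$. Applied to $p$, this gives $d(p, z) \ge c_\beta d(p', q') = 2 c_\beta\, d(p', z)$, hence $d(p', z) \le \tfrac{1}{2 c_\beta}\, d(p, z)$. The main obstacle to watch out for is the boundary case $p \in \{p', q'\}$, where $p$ lies on (or at) $p'$ and so is not formally excluded by the empty ball condition; here one uses the simple observation that $c_\beta = \tfrac{1}{3}\tan\tfrac{\beta}{2} < \tfrac{1}{2}$, which makes $\tfrac{1}{2c_\beta} > 1$ and so keeps the inequality $d(p', z) \le \tfrac{1}{2 c_\beta}\, d(p, z)$ valid trivially in that corner case too.

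Combining everything with two triangle inequalities, $d(p, p') \le d(p, z) + d(z, p') \le \bigl(1 + \tfrac{1}{2c_\beta}\bigr) d(p, z)$, and then $d(p, z^*) \le d(p, p') + d(p', z^*) \le \bigl(1 + \tfrac{1}{2c_\beta}\bigr) d(p, z) + \tfrac{1}{2c_\beta}\, d(p, z) = \bigl(1 + \tfrac{1}{c_\beta}\bigr) \lnfs(p)$. Since $z^* \in \reducedL_\beta$, this yields $d(p, \reducedL_\beta) \le \bigl(1 + \tfrac{1}{c_\beta}\bigr) \lnfs(p)$, which is exactly the desired upper bound.
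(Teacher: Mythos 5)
Your proposal is correct and follows essentially the same route as the paper's proof: the left inequality from $\reducedL_\bb \subseteq L_\bb$, and the right inequality by taking the generating pair $(p',q')$ of the nearest lean point, the minimal pair for $p'$, and combining the minimality inequality with the empty-ball condition $d(p,z)\ge c_\bb\, d(p',q')$ via triangle inequalities, arriving at the same constant $1+\frac{1}{c_\bb}$. Your extra remark about the case $p\in\{p',q'\}$ is harmless (and in fact unnecessary, since $c_\bb<\tfrac12$ already places $p'$ and $q'$ outside the empty ball), so nothing is missing.
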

\begin{proof}
The left inequality is trivial since $\reducedL_\bb \subseteq L_\bb$. We will show the right inequality. Fix any sample point $p \in P$, and let $m \in L_\bb$, the midpoint of a $\bb$-good pair $(s, t)$, be $p$'s nearest neighbor in the original lean set $L_\bb$.

Let $(s, t^*)$ be the minimal $\bb$-good pair for $s$, and $m^*$ its midpoint.
We now show that $d(p, \reducedL_\bb) \le d(p, m^*) \le (1 + \frac{1}{c_\bb}) d(p, L_\bb).$
Indeed, since $(s, t^*)$ is the minimal $\bb$-good pair for $s$, we have that $d(s, t) \ge d(s, t^*)$.
Hence

$$
d(m, m^*) \le d(m, s) + d(s, m^*) \le \frac{1}{2}(d(s, t) + d(s, t^*)) \le d(s, t).
$$
At the same time, by the empty-ball property of a $\bb$-good pair, we have that $d(p, m) \ge c_\bb d(s, t)$; that is, $d(s, t) \le \frac{1}{c_\bb} d(p, m)$.
Putting everything together, we obtain:
$$
d(p, \reducedL_\bb) \le d(p, m^*) \le d(p, m) + d(m, m^*)
\le d(p, m) + d(s, t) \le (1 +\frac{1}{c_\bb}) d(p, m)
= (1 + \frac{1}{c_\bb}) d(p, L_\bb).
$$
The claim then follows.
\end{proof}

\section{Homology inference}
\label{scaled-sec}
In this section, we aim to infer homology groups of a hidden manifold $\manifold$ from its point samples. Let $\homo_i(\cdot)$ denote the $i$-dimensional homology group. 
It refers to the singular homology when the argument is a manifold or a compact set, and to the simplicial homology when it is a simplicial complex. All homology groups in this paper are assumed to be defined over the finite field $\mathbb{Z}_2$. For details on homology groups, 
see e.g. \cite{Munkres}. 

The homology inference from a point sample of a hidden manifold
$\manifold$ has been researched extensively in 
the literature~\cite{CL06,CO08,DFW13,Sheehy}.
However, most of these work assume that the given sample 
$P\subset \manifold$
is globally dense, that is, $\eps$-dense w.r.t. to the {\em infimum}
of $\lfs$ or $\lwfs$. This strong assumption allows to infer
the homology from an appropriate 
offset of $P$ w.r.t. the distance $d(x,P)$, which
is represented with the union of balls of equal radii around
the sample points. As we indicated in the introduction, unfortunately,
when the sample is {\em adaptive} ($\eps$-dense w.r.t. a non-constant
function $\phi$),
there may not be such choice of a global radius so that the offset
captures the topology of $\manifold$.

To circumvent this problem, one needs to scale the distance
with the function $\phi$ that provides the adaptivity.
This idea was used in~\cite{CL06} where $\phi$ is taken as $\lfs$.
Approximating $\lfs$ is difficult, so we use $\lnfs$ instead
for scaling. Observe that the offset may intersect the medial
axis, but we argue that we can compute relevant offsets  
that never contains the critical points of the scaled distance,
thereby ensuring topological fidelity. 

\subsection{Scaled distance and its offsets}
In what follows we develop the results in more
generality by scaling the
distance $d_\manifold$ with the distance to a 
finite set $L\subset\mathbb{R}^k$. Later, in computations,
we replace $L$ by the lean set $L_{\frac{\pi}{5}}$ and the distance 
$d(x,L)$ with
$\lnfsp_{\frac{\pi}{5}}$ for $x\in\manifold$. 
Recall that $\Pi(x)$ denotes the set of closest neighbors of $x\in \mathbb{R}^k$ in $\manifold$. 

\begin{definition}
Given a finite set $L\subset \mathbb{R}^k$ such that $L \cap \manifold = \emptyset$,  
Let $h_L:\mathbb{R}^k \rightarrow \mathbb{R}$ be a scaled distance to the manifold
where
$$
h_L(x)= \frac{d(x,\manifold)}{d(x,\manifold)+d(x,L)}  = \frac{d(x,\Pi(x))}{d(x,\Pi(x))+d(x,L)}. 
$$
\end{definition}

We avoid the obvious choice of $h_L(x)=\frac{d(x,\manifold)}{d(x,L)}$ 
because that makes $h_L(x)$ unbounded at $L$.
We are interested in
analyzing the topology of the $\alpha$-offsets 
$\manifold_\alpha = h_L^{-1}[0,\alpha]$ of $h_L$
(clearly, $\manifold_0 = \manifold$ since $L \cap \manifold = \emptyset$) 
when $\manifold_\alpha\setminus \manifold$ 
does not include any critical points of $h_L$. 
This brings us to the
concept of flow induced by the distance
function which was studied in \cite{Grove} and later used in the 
context of sampling theory~\cite{CL05,GJ03,Lieu04}. 
The vector field $\nabla_{d_{\manifold}}$ as we defined
earlier is not continuous. However, as it is shown in \cite{Lieu04}, there exists a continuous flow $\myflow: \mathbb{R}^k \setminus \manifold \times \mathbb{R}^+ \rightarrow \mathbb{R}^k \setminus \manifold$ such that $\myflow(x, t) = x + \int_0^t \nabla_{d_{\manifold}}(\myflow(x, \tau)) d\tau$. 
For a point $x\in \mathbb{R}^k \setminus \manifold$, the image $\myflow(x, [0,t])$ of an
interval $[0,t]$ is called its 
\emph{flow line}. For a point $x\notin \manifold\cup M$, where $M$ is the medial axis of $\manifold$, the flowline 
$\myflow(x, [0,\infty])$ first coincides 
with the line segment $x\Pi(x)$ which is normal to the manifold $\manifold$. Once it reaches the medial axis $M$, it stays in $M$. 
We show that $h_L$ increases along the flow line of $d_\manifold$
in the $\alpha$-offset that we are interested in.
This, in turn, implies that the $\alpha$-offset of
our interest avoids the critical points of $h_L$.
\begin{proposition}
For $\theta\le \frac{\pi}{4}$, 
$\alpha< \frac{\cos 2\theta}{1+\cos 2\theta}$ and 
$\thetaM\cap \manifold_{\alpha}=\emptyset$,
the function $h_L$ 
increases along the flow line on the piece 
$\manifold_{\alpha}\cap \myflow(x,[0,\infty))$ where 
$x$ is any point in $\manifold_{\alpha}\setminus \manifold$. 
\label{monotone-prop}
\end{proposition}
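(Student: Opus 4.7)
The plan is to show $\frac{d}{dt} h_L(y(t)) > 0$ along the flow line $y(t) = \myflow(x,t)$ on the portion lying inside $\manifold_\alpha \setminus \manifold$. Setting $f = d_\manifold$ and $g(\cdot) = d(\cdot,L)$ so that $h_L = f/(f+g)$, differentiation gives
$$
\dot h_L \;=\; \frac{\dot f\cdot g - f\cdot \dot g}{(f+g)^2},
$$
so the task reduces to proving $\dot f\cdot g > f\cdot \dot g$ at every relevant $y$. Along the flow $\dot y = \nabla_d(y)$, the standard Lieutier/Chazal--Lieutier flow identity yields $\dot f = |\nabla_d(y)|^2$, while the $1$-Lipschitz property of $d_L$ gives $|\dot g|\le|\dot y|=|\nabla_d(y)|$. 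Combining,
$$
\dot f\cdot g - f\cdot \dot g \;\ge\; |\nabla_d|\bigl(|\nabla_d|\cdot g - f\bigr).
$$
Since $h_L(y)\le\alpha$ forces $g/f\ge(1-\alpha)/\alpha$, the entire statement reduces to the single inequality $|\nabla_d(y)| > \alpha/(1-\alpha)$ everywhere on the relevant piece of the flow (which simultaneously rules out the flow stalling).

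For this gradient lower bound I would split into two cases. Off the medial axis $M$, one has $|\nabla_d| = 1$, and the inequality is immediate because $\alpha < \cos 2\theta/(1+\cos 2\theta) \le 1/2$ implies $\alpha/(1-\alpha) < 1$. On $M$, I would use the description $\nabla_d(y) = (y-c)/f$ with $c$ the closest point of $y$ in $\conv \Pi(y)$; writing $\rho = |y-c|$ and $r = \sqrt{f^2-\rho^2}$, the support vertices (those of the face of $\conv \Pi(y)$ containing $c$) lie at distance exactly $r$ from $c$ in the hyperplane through $c$ perpendicular to $y-c$, and $c$ lies in their convex hull. The elementary fact that whenever $0\in\conv\{u_1,\ldots,u_k\}$ there must exist a pair with $u_i\cdot u_j\le 0$ then furnishes two support points $p,q\in\Pi(y)$ with $|p-q|^2 \ge 2r^2$. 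On the other hand, the hypothesis $\thetaM\cap\manifold_\alpha = \emptyset$, combined with Proposition~\ref{large-angle} (which places the critical points of $d_\manifold$ inside $M_{\pi/4}\subseteq \thetaM$ for $\theta\le \pi/4$), guarantees that $y\notin \thetaM$, so $\angle pyq < 2\theta$ and $|p-q| < 2f\sin\theta$. Putting these together yields $r^2/f^2 < 1-\cos 2\theta$ and hence
$$
|\nabla_d(y)| \;=\; \rho/f \;=\; \sqrt{1-r^2/f^2} \;>\; \sqrt{\cos 2\theta}.
$$
Since $\sqrt{\cos 2\theta} > \cos 2\theta > \alpha/(1-\alpha)$ (the last step being a direct rearrangement of the assumed bound on $\alpha$), the desired inequality holds throughout.

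The main obstacle is the medial-axis case: I must translate the purely angular condition defining $\thetaM$ into a quantitative lower bound on $|\nabla_d|$. The perpendicularity structure of Lieutier's generalized gradient, which gives the identity $f^2 = \rho^2 + r^2$, together with the convex-hull lemma producing two support points with non-positive inner product, are the two geometric ingredients that make the constants line up exactly with the stated bound $\alpha < \cos 2\theta/(1+\cos 2\theta)$. Once these are in place, the off-medial-axis case and the remaining algebra are routine.
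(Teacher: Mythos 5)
Your proposal is correct, and it follows the paper's overall strategy — show that along the flow the rate of increase of $d_\manifold$ beats the $1$-Lipschitz variation of $d(\cdot,L)$, with the threshold $\alpha<\frac{\cos 2\theta}{1+\cos 2\theta}$ exactly absorbing the resulting ratio bound — but your treatment of the key medial-axis estimate is genuinely different from the paper's. The paper argues by finite differences: it introduces an auxiliary angle $\phi$ (the supremum half-angle realized on $\manifold_\alpha\cap M$, so $\phi<\theta$), claims the angle between the flow direction $\overrightarrow{oy}$ and $\overrightarrow{zy}$ for some $z\in\Pi(y)$ is at most $2\phi$ via an existence argument for a maximal-angle pair, and passes to the limit to get $\Delta d \ge d(y,y')\cos 2\phi$, then compares with $\Delta\tilde d\le d(y,y')$. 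You instead invoke Lieutier's identity $\frac{d}{dt}\,d_\manifold(\myflow(x,t))=\|\nabla_{d_\manifold}\|^2$ and prove the explicit pointwise bound $\|\nabla_{d_\manifold}(y)\|>\sqrt{\cos 2\theta}$ at medial points avoiding $\thetaM$, using the orthogonality structure of the generalized gradient ($f^2=\rho^2+r^2$) together with the lemma that a point in the convex hull of points equidistant from it admits a pair with non-positive inner product, hence a pair of closest points at distance at least $r\sqrt{2}$, which the $\thetaM$-avoidance caps at $2f\sin\theta$. What your route buys is a sharper and fully quantitative derivative bound that sidesteps the paper's more delicate steps (the auxiliary $\phi$, the maximal-angle-pair claim, and the informal limit $\Delta d\rightarrow d(y,y')\cos\angle oyz$); what the paper's route buys is that it never needs the identity $\dot f=\|\nabla_{d_\manifold}\|^2$, only a directional finite-difference comparison. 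One small point to tighten in your write-up: since $\Pi(y)$ need not be finite, "the vertices of the face of $\conv\Pi(y)$ containing $c$" should be replaced by a Carath\'eodory representation $c=\sum\lambda_i z_i$ with $z_i\in\Pi(y)$, where first-order optimality of the projection forces $(z_i-c)\cdot(y-c)=0$ for every $z_i$ with $\lambda_i>0$; this yields exactly the equidistant, coplanar configuration your pairing lemma needs, and also covers the degenerate case $r=0$ where $\|\nabla_{d_\manifold}(y)\|=1$.
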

\begin{proof}
First, observe that, due to Proposition~\ref{large-angle},
we can assert that $\manifold_\alpha\setminus\manifold$ 
contains no critical point
of $d_\manifold$ since $\manifold_\alpha\cap M_\theta=\emptyset$ and 
$\theta \le \frac{\pi}{4}$. Therefore, flow lines for every point
$x\in \manifold_\alpha\setminus \manifold$ are (topological) segments.
Consider an arbitrary point $y = \myflow(x, t)$ such that 
$y \in \manifold_\alpha$. Set $d = d(y, \manifold)$ and $\tilde{d} = d(y, L)$. 
Since $y \in \manifold_\alpha$, we have
\begin{equation}
h_L(y)\leq \alpha ~~\Longrightarrow~~ \frac{\tilde d}{d} \geq \frac{1-\alpha}{\alpha}.
\label{d1-eq}
\end{equation}

For arbitrary small $\Delta t > 0$, let $\Delta d$ and $\Delta \tilde d$
denote the changes in the distances $d$ and $\tilde d$ respectively when
we move on the flow line from 
$y = \myflow(x,t)$ to $y' = \myflow(x,t+\Delta t)$.
Observe that by the
triangle inequality, $\Delta \tilde d = |d(y, L) - d(y', L)| \leq d(y, y')$.
We claim that $\Delta d \ge  d(y, y') \cdot \cos 2\phi$ where 
$\phi$ is the maximum angle so that any point of 
$\manifold_\alpha \cap M$ belongs to $M_\phi$. 

\parpic[r]{\includegraphics[height=4cm]{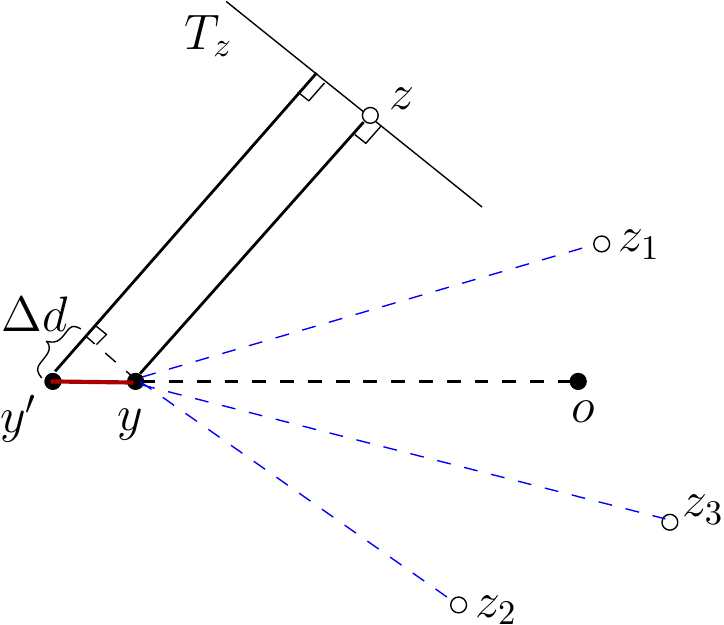}}
The flow line $\myflow(x, [0,\infty))$ follows 
a direction that is normal to the manifold $\manifold$ when 
it does not lie in the medial axis $M$ of $\manifold$. 
If $y$ lies on a portion of the flow line which is normal to the manifold $\manifold$, then it is easy to see that $\Delta d = |d(y, \manifold) - d(y', \manifold)| = d(y, y') \ge d(y, y') \cdot \cos 2\phi$. 
If $y$ lies on a portion of the flow line which is contained in the medial axis $M$, then the definition of $\phi$ implies that, 
for any two points $z_1, z_2 \in \Pi(y)$, 
the angle $\angle z_1 y z_2 \le 2 \phi$. 
At the same time, it is known that if $y \in M$, then the flow 
direction $\nabla_{d_{\manifold}} (y)$ at $y = \myflow(x, t)$ points in the 
direction of $\overrightarrow{oy}$ where $o$ is the center of the 
minimum enclosing ball for $\Pi(y)$ (see e.g, \cite{Lieu04}). 
In fact, $o$ must be contained in the convex hull of points in $\Pi(y)$. 
This further leads to that there exists a pair of points $z_1,z_2\in \Pi(y)$
so that the angle between $\overrightarrow{oy}$ and $\overrightarrow{zy}$ 
for any $z\in \Pi(y)$ is at most the angle $\angle z_1yz_2$, 
which is at most $2 \phi$. 
See the figure for an illustration where $\Pi(y) = \{z, z_1, z_2, z_3\}$, and $T_z$ is the intersection of the tangent space of $\manifold$ at $z$ with the plane spanned by $o, y, z$. 
Hence, in the limit as $y'\rightarrow y$, 
$\Delta d \rightarrow d(y, y') \cdot \cos \angle oyz$ for some $z \in \Pi(y)$, 
implying $\Delta d \ge d(y, y') \cdot \cos (2\phi)$. 

Finally, note that in the claim, we require that 
$M_\theta \cap \manifold_\alpha = \emptyset$. By definition of $\phi$,
this means that $\theta > \phi$. 
Hence, for $\theta\le\frac{\pi}{4}$, $\frac{\cos 2\theta}{1+\cos 2\theta} = 1 - \frac{1}{1+\cos 2\theta} \le 1 - \frac{1}{1+\cos 2\phi}$. The condition
$\alpha<\frac{\cos 2\theta}{1+\cos 2\theta}$ now provides that 
$\frac{1}{\cos 2\phi} < \frac{1-\alpha}{\alpha}$. 
It follows that: 
\begin{equation*}
\frac{\Delta \tilde d}{\Delta d} \leq \frac{1}{\cos 2 \phi}
< \frac{1-\alpha}{\alpha}\leq \frac{\tilde d}{d}\\
\Longrightarrow \frac{\tilde d +\Delta \tilde d}{d+\Delta d} < 
\frac{\tilde d}{d} \Longrightarrow h_L(\myflow(x,t)) < h_L(\myflow(x,t+\Delta t)).
\end{equation*}
\end{proof} 

Now, we will show that the $\alpha$-offset $\manifold_\alpha$
remains homotopy equivalent to $\manifold$ if $\alpha$ is
chosen appropriately. For the standard distance function
$d_\manifold$, such a result is well known~\cite{CL06,CL07}. Here, we
need the result for the scaled distance $h_L$ which we establish
using Proposition~\ref{monotone-prop} and the critical point 
theory of Grove~\cite{Grove}. 
The isotopy lemma of Grove~\cite{Grove} provides the partial
result that $\manifold_\alpha$ is homotopy equivalent to 
a smaller offset $\manifold_{\alpha'}$, $\alpha'<\alpha$.
Then we argue that $X_{\alpha'}$ is homotopy equivalent to $\manifold$
when $\alpha'$ is sufficiently small.

\begin{proposition}
Let $\theta \le \frac{\pi}{4}$ and 
$\alpha < \frac{\cos 2\theta}{1+\cos 2\theta}$.
Let $\manifold_{\alpha}$ be as defined in proposition~\ref{monotone-prop} where 
$\manifold_{\alpha}\cap \thetaM=\emptyset$. Then, 
$\manifold_\alpha$ is homotopy equivalent to $\manifold$
and hence $\homo_i(\manifold_\alpha)\cong \homo_i(\manifold)$ 
for each dimension $i\geq 0$. 
\label{manifold-homotopy}
\end{proposition}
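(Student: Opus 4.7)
The plan is the two-step Morse-theoretic argument already sketched in the paragraph preceding the statement: first deformation retract $\manifold_\alpha$ onto a thin sublevel set $\manifold_{\alpha'}$ for some sufficiently small $\alpha'\in(0,\alpha)$, and then deformation retract $\manifold_{\alpha'}$ onto $\manifold$ itself via nearest-point projection inside a tubular neighborhood. Chaining these two retractions gives $\manifold\simeq\manifold_\alpha$, and the homology isomorphism follows from functoriality of singular homology.

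For the first step I would invoke Proposition~\ref{monotone-prop}, which asserts that $h_L$ is strictly increasing along every orbit of the continuous flow $\myflow$ of $d_\manifold$ inside $\manifold_\alpha\setminus\manifold$. Moreover, $\myflow$ has no stationary points in $\manifold_\alpha\setminus\manifold$: the critical points of $d_\manifold$ all lie in $M_{\pi/4}$ by Proposition~\ref{large-angle}, and $M_{\pi/4}\subseteq\thetaM$ for $\theta\le\frac{\pi}{4}$, while $\thetaM\cap\manifold_\alpha=\emptyset$ by hypothesis. Together these two facts say that $h_L$ has no critical value in $(0,\alpha]$ in the generalized sense of Grove's critical point theory, with $\myflow$ serving as a continuous ``gradient-like'' field with respect to which $h_L$ is monotone. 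Grove's isotopy lemma \cite{Grove}, as adapted to distance-like functions in \cite{CL05,CL07,Lieu04}, then yields a strong deformation retract of $\manifold_\alpha$ onto $\manifold_{\alpha'}$ for every $\alpha'\in(0,\alpha)$; concretely, one reparametrizes each $\myflow$-orbit by its $h_L$-value and pushes points backward along the orbit until they drop to level $\alpha'$.

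For the second step I would choose $\alpha'$ small enough to force $\manifold_{\alpha'}$ inside the tubular neighborhood $U=\{x\in\mathbb{R}^k:d(x,\manifold)<\lfs_{\min}\}$, where $\lfs_{\min}=\min_{y\in\manifold}\lfs(y)>0$ exists by compactness and smoothness of $\manifold$. This is feasible because $L$ is finite and disjoint from the compact manifold, so $D:=\max_{y\in\manifold}d(y,L)<\infty$; the triangle inequality $d(x,L)\le d(x,\manifold)+D$ combined with $h_L(x)\le\alpha'$ gives $d(x,\manifold)\le\frac{\alpha'}{1-2\alpha'}D$, which vanishes as $\alpha'\to 0$. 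Within $U$ the nearest-point projection $\Pi:U\to\manifold$ is single-valued and continuous (standard tubular neighborhood theorem), so the straight-line homotopy $H(x,s)=(1-s)x+s\Pi(x)$ is a strong deformation retract of $\manifold_{\alpha'}$ onto $\manifold$.

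The main technical obstacle, I expect, is rigorously framing the first step: Grove's isotopy lemma is classically stated for the distance function $d_\manifold$ itself, whereas here the function being controlled is the scaled composite $h_L$, which is not a distance function. The cleanest workaround is to avoid the general machinery entirely and build the first retraction by hand from Proposition~\ref{monotone-prop}: the strict monotonicity of $h_L$ along each $\myflow$-orbit means every such orbit meets the level set $h_L^{-1}(\alpha')$ in at most one point inside $\manifold_\alpha\setminus\manifold$, so one can push each $x\in\manifold_\alpha$ backward along its orbit to that level, with continuity inherited from the continuity of $\myflow$ established in \cite{Lieu04}. Once the first retraction is in place, the second step is routine.
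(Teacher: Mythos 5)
Your proposal follows essentially the same route as the paper's own proof: first use the monotonicity of $h_L$ along the flow of $d_\manifold$ (Proposition~\ref{monotone-prop}) together with Grove's isotopy lemma to deformation retract $\manifold_\alpha$ onto a thin sublevel set $\manifold_{\alpha'}$, and then, for $\alpha'$ sufficiently small, retract $\manifold_{\alpha'}$ onto $\manifold$ along the normal segments $x\,\Pi(x)$ (the paper phrases this as $\manifold_{\alpha'}$ meeting each normal line in a connected segment, which is the same as your nearest-point-projection homotopy inside the tube of radius $\lfs_{\min}$). Your explicit bound $d(x,\manifold)\le \frac{\alpha'}{1-2\alpha'}\max_{y\in\manifold} d(y,L)$ and your remark about justifying Grove's lemma for the non-distance function $h_L$ are reasonable elaborations of steps the paper treats briefly, so the argument is correct at the same level of rigor as the original.
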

\begin{proof}
Consider a real $\alpha'$ where $0<\alpha'\leq \alpha$.
Let $B={\mathrm Closure}\,(X_\alpha\setminus X_{\alpha}')$.
Any point $x\in B$ has a flow line $\myflow(x,[0,t])$ along which
$h_L$ strictly increases (Proposition~\ref{monotone-prop}). In particular,
there is a unit vector originating at $x$ along which
$\nabla_{h_L}$ does not vanish. Therefore, $B$ does not
contain any critical point of $h_L$. Applying the isotopy lemma
of Grove~\cite{Grove}, we conclude that $B$ deformation retracts 
to the bounding hypersurface $h_L^{-1}(\alpha')$ of $X_{\alpha}'$.
The resulting homotopy equivalence can be extended to
a map $r: \manifold_\alpha=B\cup \manifold_{\alpha'} 
\rightarrow h_L^{-1}(\alpha')\cup\manifold_{\alpha'}=\manifold_{\alpha'}$
by restricting $r$ to identity on $\manifold_{\alpha'}$.
It follows that $r$ is a homotopy equivalence. 

For any point $x\in \manifold_{\alpha'}\setminus \manifold$, 
a flow line $\myflow(x,[0,t])$ cannot re-enter 
$\manifold_{\alpha'}$ once it exits 
because of the monotonicity of $h_L$.
This means $\myflow(x,[0,t])$ intersects 
$\manifold_{\alpha'}$ in one connected
segment. 
Let $x'$ be the unique point where $\myflow(x,[0,t])$
intersects the hypersurface $h_L^{-1}(\alpha')$.
Since $\manifold$ is compact and smooth, by choosing $\alpha'>0$ sufficiently small, one can ensure that 
$\myflow(x,[0,t])\cap \manifold_{\alpha'}$ lies on the normal 
line segment $x \Pi(x)$, for all $x\in \manifold_{\alpha'}\setminus \manifold$. 
It implies that
$\manifold_{\alpha'}$ intersects the normal lines
to $\manifold$ in a connected segment along which
$\manifold_{\alpha'}$ can be retracted
to $\manifold$ completing the proof.
\end{proof}

\subsection{Interleaving and inference}
\label{sec:topology}
Our goal is to interleave the $\alpha$-offsets of $h_L$ with
the union of a set of balls $\cup B$ centered at the sample points 
because then, following the approach in~\cite{CO08}, we can relate
the topology of the nerve complex of $\cup B$ with that of $\manifold$.
For the distance function $d_\manifold$, the offsets restricted
to the sample $P$ provide the required set of balls because
$d_{\manifold}|_P$ approximates $d_\manifold$. Unfortunately, offsets
of $h_L$ restricted to $P$ are not necessarily union of geometric
balls centering points in $P$. 
Nevertheless, we show that a set of
balls whose 
radii are proportional to the distances to $L$
have the necessary property. 

First, we consider the union of balls, one for every point
in $\manifold$.  
Let $\cup B_{\alpha}$ denote the union of
balls $B(x,r)$ for every $x\in \manifold$ where $r=\alpha d(x, L)$.
One has the following interleaving result.

\begin{proposition}
$\manifold_{\frac{\alpha}{1+2\alpha}} \subseteq \cup B_{\alpha} \subseteq \manifold_{\alpha}$.
\label{interleave-prop1}
\end{proposition}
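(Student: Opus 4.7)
The plan is to rewrite the sub-level set condition in a convenient algebraic form and then verify each inclusion with a short triangle-inequality argument. Observe that for any $y \in \mathbb{R}^k$, the condition $h_L(y) \le \alpha$ is equivalent to $(1-\alpha) d(y,\manifold) \le \alpha\, d(y,L)$, i.e., $d(y,\manifold) \le \frac{\alpha}{1-\alpha} d(y,L)$. Thus $\manifold_\alpha = \{y : d(y,\manifold) \le \tfrac{\alpha}{1-\alpha} d(y,L)\}$. This reformulation turns the proof into two estimates that compare $d(y,\manifold)$ and $d(y,L)$ using distances to a well-chosen witness point $x \in \manifold$.

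For the easy inclusion $\cup B_\alpha \subseteq \manifold_\alpha$, I would pick any $y \in \cup B_\alpha$ together with a witness $x \in \manifold$ satisfying $\|y-x\| \le \alpha\, d(x,L)$. Then $d(y,\manifold) \le \|y-x\| \le \alpha\, d(x,L)$ is immediate, while the reverse triangle inequality gives $d(x,L) \le \|y-x\| + d(y,L) \le \alpha\, d(x,L) + d(y,L)$, hence $d(x,L) \le \frac{1}{1-\alpha} d(y,L)$. Combining these two bounds yields $d(y,\manifold) \le \frac{\alpha}{1-\alpha} d(y,L)$, which is exactly the membership condition for $\manifold_\alpha$.

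For the other inclusion $\manifold_{\alpha/(1+2\alpha)} \subseteq \cup B_\alpha$, I would take $y \in \manifold_{\alpha'}$ with $\alpha' = \frac{\alpha}{1+2\alpha}$ and pick the natural witness $x \in \Pi(y)$, so that $\|y-x\| = d(y,\manifold)$. I want to show $\|y-x\| \le \alpha\, d(x,L)$. Using $d(x,L) \ge d(y,L) - \|y-x\| = d(y,L) - d(y,\manifold)$, it suffices to verify $(1+\alpha) d(y,\manifold) \le \alpha\, d(y,L)$, i.e., $d(y,\manifold) \le \frac{\alpha}{1+\alpha} d(y,L)$. A direct calculation shows $\frac{\alpha'}{1-\alpha'} = \frac{\alpha}{1+\alpha}$ for the chosen $\alpha'$, so the membership condition for $\manifold_{\alpha'}$ gives exactly the needed inequality.

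The argument is entirely elementary once one commits to the right witness point in each direction, and the only real obstacle is choosing the scaling $\alpha' = \alpha/(1+2\alpha)$ so that the algebra closes. I would introduce it by back-solving from the target inequality $(1+\alpha) d(y,\manifold) \le \alpha\, d(y,L)$, rather than guessing it from the statement, since that makes the logic of the inclusion transparent and also explains why this particular denominator $1+2\alpha$ appears.
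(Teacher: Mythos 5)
Your proof is correct and follows essentially the same route as the paper's: for each inclusion you pick the same witness point (the closest manifold point $\Pi(y)$ in one direction, the ball center in the other) and apply the same triangle inequality, with the only difference being that you first rewrite $h_L(y)\le\alpha$ as the linear inequality $(1-\alpha)\,d(y,\manifold)\le\alpha\, d(y,L)$ instead of manipulating the ratio directly as the paper does. (Just note that the division form $d(y,\manifold)\le\frac{\alpha}{1-\alpha}d(y,L)$ presumes $\alpha<1$, which is harmless here since the statement is trivial otherwise.)
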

\begin{proof}
First we show the left inclusion. Let $x$ be any point in $\manifold_{\frac{\alpha}{1+2\alpha}}$, and $y$ an arbitrary point from $\Pi(x)$ (i.e, $d(x,y) = d(x, \Pi(x))$).
Then we have,
\begin{eqnarray*}
\frac{d(x,y)}{2d(x,y)+d(y,L)}
&=& \frac{d(x,y)}{d(x,y)+ (d(x,y)+d(y,L))}\\
&\leq& \frac{d(x,y)}{d(x,y)+ d(x,L)}
=\frac{\alpha}{1+2\alpha} \mbox{ since
$x\in \manifold_{\frac{\alpha}{1+2\alpha}}$}
\end{eqnarray*}
It then follows that
$$
(1+2\alpha)d(x,y)\leq 2\alpha d(x,y)+ \alpha d(y,L)
\Longrightarrow d(x,y) \leq \alpha d(y,L)
\Longrightarrow x \in \cup B_{\alpha}.
$$

We now prove the second inclusion. Let $x$ be any point in $\cup B_{\alpha}$.
Let $z\in \manifold$ be a point so that $x\in B(z, \alpha d(z,L))$;
that is, $d(x,z) \le \alpha d(z, L)$.
Such a point exists by the definition of $\cup B_{\alpha}$.
Using triangle inequality, we have:
\begin{eqnarray*}
h_L(x)= \frac{d(x,\manifold)}{d(x,\manifold)+d(x,L)}
\leq \frac{d(x,z)}{d(x,z)+d(x,L)} \leq \frac{d(x,z)}{d(z, L)} \leq \frac{\alpha d(z, L)}{d(z,L)} = \alpha.
\end{eqnarray*}
\end{proof}

We extend the above interleaving result to the union of balls whose
centers are restricted only to a sample $P\subset \manifold$. 
For convenience we define the following sampling condition closely related 
the $\eps$-dense sampling condition.
\begin{definition}
A finite set $P \subset \manifold$ is a $(\delta, L)$-sample of $\manifold$ if every point 
$x\in \manifold$ has a point $p\in P$ so that 
$d(x,p)\leq \delta d(p,L)$. 
Furthermore, let $\cup P_\alpha= \cup_{p\in P} B(p,\alpha d(p,L))$ 
denote the union of scaled balls around sample points in $P$. 
\end{definition}

\begin{newremark}
A $\delta$-dense sample w.r.t. $\lnfs$ is also a 
$(\frac{\delta}{1-\delta}, L_\bb)$-sample 
of $\manifold$. Conversely, a $(\delta, L_\bb)$-sample
of $X$ is also a $\frac{\delta}{1-\delta}$-dense sample 
w.r.t. $\lnfs$. These follow from the
fact that $\lnfs$ is $1$-Lipschitz.
\label{lnfs-remark}
\end{newremark}

\begin{proposition}
For a $(\delta,L)$-sample $P$ of $\manifold$ and any $\alpha >0$, we have
$\manifold_{\frac{\alpha}{1+2\alpha}} \subseteq \cup P_{\alpha+ \delta+\alpha\delta}  \subseteq \manifold_{\alpha+\delta+\alpha\delta}.$
\label{interleave-prop2}
\end{proposition}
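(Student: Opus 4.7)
The plan is to mirror the two-step structure of Proposition~\ref{interleave-prop1}: prove the left inclusion by producing a nearby point of $P$ via the $(\delta,L)$-sample condition, and prove the right inclusion by the same monotonicity/triangle-inequality trick used there, with the extra slack absorbed through the $1$-Lipschitz nature of $d(\cdot,L)$.

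For the left inclusion, I would start with an arbitrary $x\in\manifold_{\alpha/(1+2\alpha)}$ and apply Proposition~\ref{interleave-prop1} to obtain a manifold point $z\in\manifold$ with $d(x,z)\le \alpha\, d(z,L)$. Since $P$ is a $(\delta,L)$-sample, I then pick $p\in P$ with $d(z,p)\le \delta\, d(p,L)$. The triangle inequality gives $d(z,L)\le d(z,p)+d(p,L)\le (1+\delta)\, d(p,L)$, so combining
\[
d(x,p)\le d(x,z)+d(z,p)\le \alpha(1+\delta)\, d(p,L)+\delta\, d(p,L)=(\alpha+\delta+\alpha\delta)\, d(p,L),
\]
which places $x$ inside $B(p,(\alpha+\delta+\alpha\delta)\, d(p,L))\subseteq \cup P_{\alpha+\delta+\alpha\delta}$.

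For the right inclusion, take $x\in \cup P_{\alpha+\delta+\alpha\delta}$ and let $p\in P$ with $d(x,p)\le (\alpha+\delta+\alpha\delta)\, d(p,L)$. Since $p\in\manifold$, we have $d(x,\manifold)\le d(x,p)$, and because $t\mapsto t/(t+d(x,L))$ is increasing,
\[
h_L(x)=\frac{d(x,\manifold)}{d(x,\manifold)+d(x,L)}\le \frac{d(x,p)}{d(x,p)+d(x,L)}\le \frac{d(x,p)}{d(p,L)}\le \alpha+\delta+\alpha\delta,
\]
where the middle inequality uses $d(x,p)+d(x,L)\ge d(p,L)$ by the triangle inequality. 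Hence $x\in \manifold_{\alpha+\delta+\alpha\delta}$.

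There is no real obstacle here beyond bookkeeping: the only subtlety is that the ball radii are not uniform but scale with $d(p,L)$, so one must convert $d(z,L)$ to $d(p,L)$ (and likewise $d(x,L)$ to $d(p,L)$) via the $1$-Lipschitz property of $d(\cdot,L)$, which is what introduces the $(1+\delta)$ factor and hence the $\alpha\delta$ cross term in the final radius.
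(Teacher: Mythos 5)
Your proof is correct and takes essentially the same route as the paper: the left inclusion is Proposition~\ref{interleave-prop1} combined with the $(\delta,L)$-sampling condition and the triangle inequality (converting $d(z,L)$ to $(1+\delta)\,d(p,L)$, hence the $\alpha\delta$ term), and your right inclusion simply re-derives, for the radius $\alpha+\delta+\alpha\delta$, the bound that the paper obtains by noting $\cup P_{\alpha+\delta+\alpha\delta}\subseteq \cup B_{\alpha+\delta+\alpha\delta}$ and citing Proposition~\ref{interleave-prop1} again.
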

\begin{proof}
Recall that by definition $\cup B_{\alpha}= \cup_{x\in \manifold} B(x, \alpha d(x,L))$.
By the $(\delta, L)$-sampling condition of $P$, as well as triangle inequality, we have $\cup B_\alpha \subseteq \cup P_{\alpha+\delta+\alpha\delta}$. 
Combining this with the left inclusion in Proposition ~\ref{interleave-prop1}, we have 
$$
\manifold_{\frac{\alpha}{1+2\alpha}} \subseteq \cup B_\alpha 
\subseteq \cup P_{\alpha+\delta+\alpha\delta}.
$$
The second inclusion follows because $\cup P_{\alpha+\delta+\alpha\delta} \subseteq \cup B_{\alpha+\delta+\alpha\delta}$ and $\cup B_{\alpha+\delta+\alpha\delta} \subseteq \manifold_{\alpha+\delta+\alpha\delta}$ (Proposition~\ref{interleave-prop1}).
\end{proof}
 
With the isomorphisms in the homology groups of the offset of our 
scaled distance function (Proposition \ref{manifold-homotopy}) and the 
interleaving result (Proposition \ref{interleave-prop2}), 
we can infer the homology of the hidden manifold $\manifold$ 
from the union of balls $\cup P_{\alpha}$. 

Suppose that $P$ is a $(\delta, L)$-sample of the manifold $\manifold$. 
Recall that $\cup P_{\alpha}$ denotes the union of 
balls $\bigcup_{p\in P} B(p, \alpha d(p, L) )$ centered at each 
point $p \in P$, with radius $\alpha d(p, L)$. 
Note that the parameter $\alpha$ does not stand for 
\emph{distance threshold}, but a \emph{scale parameter} for the 
distance $d(p, L)$. This parameter is universal for all points, 
while the distance $d(p, L)$ makes the union of balls adaptive. 

By manipulating the result in Proposition ~\ref{interleave-prop2}, 
one obtains that, for $\alpha + \delta \le \frac{1}{4}$ and 
$\alpha'=\frac{\alpha}{2(1-\alpha)}$, 
\begin{equation*}
\manifold_{\frac{\alpha}{2}}\subseteq \cup P_{\alpha'+\delta+\alpha'\delta}
\subseteq \cup P_{\frac{5}{4}\alpha'+\delta}
\subseteq \cup P_{\alpha+\delta} 
\end{equation*}
When $\alpha+\delta \le \frac{1}{6}$ and 
$\alpha'=\frac{\alpha+\delta}{1-2(\alpha+\delta)}$, similar manipulation gives
\begin{equation*}
\manifold_{\alpha+\delta}\subseteq \cup P_{\alpha'+\delta+\alpha'\delta}
\subseteq \cup P_{\frac{7}{6}\alpha'+\delta}
\subseteq \cup P_{\frac{11}{4}(\alpha+\delta)} 
\subseteq \cup P_{3(\alpha+\delta)}
\end{equation*}
So, for $\alpha+\delta \le \frac{1}{6}$, we obtain 
\begin{equation}
\manifold_{\frac{\alpha}{2}}\subseteq \cup P_{\alpha+\delta}
\subseteq \manifold_{\alpha+\delta} \subseteq \cup P_{3(\alpha+\delta)}
\subseteq \manifold_{3(\alpha+\delta)}
\label{eqn:C}
\end{equation}
which leads to inclusion-induced homomorphisms at the
homology level that interleave:
$$
\homo_i(\manifold_{\frac{\alpha}{2}})\rightarrow \homo_i(\cup P_{\alpha+\delta})
\rightarrow \homo_i(\manifold_{\alpha+\delta}) \rightarrow 
\homo_i(\cup P_{3(\alpha+\delta)})\rightarrow \homo_i(\manifold_{3(\alpha+\delta)})
$$
On the other hand, if 
$3(\alpha+\delta) < \frac{\cos 2\theta}{1+\cos 2\theta}$ 
and $\manifold_{3(\alpha+\delta)}\cap M_\theta=\emptyset$, we can use
Proposition~\ref{manifold-homotopy} and 
Lemma 3.2 in~\cite{CO08} to claim that 
$$
{\mathrm{image}\,}(\homo_i(\cup P_{\alpha+\delta})\rightarrow 
\homo_i(\cup P_{3(\alpha+\delta)})) \cong \homo_i(\manifold).
$$
Let $C^{\alpha}(P)$ denote the nerve of $\cup P_\alpha$.
One can recognize the resemblance between $C^{\alpha}(P)$ and
the well-known \v{C}ech complex. Both are nerves of unions of 
closed balls, but unlike \v{C}ech complexes, $C^{\alpha}(P)$
is the nerve of a union of balls that may have different radii;
recall that $\alpha$ denotes a fraction relative
to a distance rather than an absolute distance. 
The Nerve Lemma~\cite{borsuk48} provides that $C^{\alpha}(P)$ is
homotopy equivalent to $\cup P_{\alpha}$. Also, the argument
of Chazal and Oudot~\cite{CO08} to prove Theorem 3.5 can be extended to claim that for any $i \ge 0$, 
$$
{\mathrm{rank}}\, (\homo_i(C^{\alpha+\delta}(P))
\rightarrow \homo_i(C^{3(\alpha+\delta)}(P))= 
{\mathrm{rank}}\, 
(\homo_i(\cup P_{\alpha+\delta})\rightarrow 
\homo_i(\cup P_{3(\alpha+\delta)})) = {\mathrm{rank}}\,\homo_i(\manifold).
$$

The complex $C^{\alpha}(P)$ interleaves with another complex
$R^{\alpha}(P)$ that is reminiscent of the interleaving
of the \v{C}ech with the Vietoris-Rips complexes.
Specifically, let 
\[
R^{\alpha}(P):= \{\sigma\,|\, d(p,q)\le \alpha(d(p,L)+d(q,L))
\mbox{ for every edge $pq$ of $\sigma$} \}.
\]
It is easy to observe that $R^{\alpha}(P)$ is the completion of
the $1$-skeleton of $C^{\alpha}(P)$ and the following inclusions
hold as in the case of the original \v{C}ech and Vietoris-Rips complexes.
\[
C^{\alpha}(P)\subseteq R^{\alpha}(P) \subseteq C^{2\alpha}(P)
\mbox{ for any $\alpha\geq 0$}.
\]
Now, by choosing $\alpha + \delta \le \frac{1}{6} \frac{\cos 2\theta}{1+\cos 2\theta}$ (which also implies $\alpha + \delta \le \frac{1}{12}$ since $ \frac{\cos 2\theta}{1+\cos 2\theta}\le \frac{1}{2}$), we have a sequence similar to (\ref{eqn:C}) that eventually induces the following sequence: 
\[
\homo_i(C^{\alpha+\delta}(P))\rightarrow \homo_i(R^{\alpha+\delta}(P)) 
\rightarrow \homo_i(C^{2(\alpha+\delta)}(P)) \rightarrow \homo_i(C^{6(\alpha+\delta)}(P)) 
\rightarrow \homo_i(R^{6(\alpha+\delta)}(P))
\rightarrow \homo_i(C^{12(\alpha+\delta)}(P)). 
\]
In particular, following a similar argument as before, we have that 
$${\mathrm{rank}}\, (\homo_i(C^{\alpha+\delta}(P))
\rightarrow \homo_i(C^{12(\alpha+\delta)}(P))= 
{\mathrm{rank}}\, (\homo_i(C^{2(\alpha+\delta)}(P))
\rightarrow \homo_i(C^{6(\alpha+\delta)}(P)) = {\mathrm{rank}}\,\homo_i(\manifold)$$ 
as long as $12(\alpha+\delta)\le \frac{\cos 2\theta}{1+\cos 2\theta}$
and $\manifold_{12(\alpha+\delta)}\cap M_\theta=\emptyset$.
By using the standard results of interleaving~\cite{CO08} on this sequence, 
we obtain that
\begin{eqnarray*}
{\mathrm{rank}}(\homo_i(R^{\alpha+\delta}(P))\rightarrow
\homo_i(R^{6(\alpha+\delta)}(P)))
= {\mathrm{rank}}\,\homo_i(\manifold). 
\label{topo-eq}
\end{eqnarray*}
\vspace*{-0.1in}
\begin{theorem}
For a finite set $L\subset \mathbb{R}^k$ where
$L\cap \manifold=\emptyset$, let $P$ be a $(\delta, L)$-sample of 
the manifold $\manifold \subset \mathbb{R}^k$. 
Let $\theta \le \frac{\pi}{4}$, and 
$\alpha+ \delta \le \frac{1}{12}\frac{\cos 2\theta}{1+\cos 2\theta}$. 
If $\manifold_{12(\alpha+\delta)} \cap M_\theta = \emptyset$, then $\mathrm{rank} (\homo_i(\manifold)) = \mathrm{rank} (\homo_i(R^{\alpha+\delta}(P))\rightarrow \homo_i(R^{6(\alpha+\delta)}(P)))$, for any $i \ge 0$. 
\label{thm:topoinf-general}
\end{theorem}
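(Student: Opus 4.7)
The plan is to chain together the machinery already developed---the offset/union-of-balls interleaving of Proposition~\ref{interleave-prop2}, the homotopy equivalence of the scaled offsets with $\manifold$ in Proposition~\ref{manifold-homotopy}, and the standard \v{C}ech--Rips-style sandwich---and then close the argument with a persistent-homology diagram chase in the style of Chazal--Oudot.

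First I would instantiate Proposition~\ref{interleave-prop2} at the relevant scales to produce the chain (\ref{eqn:C}) displayed in the excerpt, namely $\manifold_{\alpha/2} \subseteq \cup P_{\alpha+\delta} \subseteq \manifold_{\alpha+\delta} \subseteq \cup P_{3(\alpha+\delta)} \subseteq \manifold_{3(\alpha+\delta)}$. Invoking the Nerve Lemma \cite{borsuk48} identifies each $\cup P_{\alpha'}$ with its nerve $C^{\alpha'}(P)$ up to homotopy equivalence, and every geometric inclusion descends to a simplicial inclusion between the corresponding nerves. The standard sandwich $C^{\alpha'}(P) \subseteq R^{\alpha'}(P) \subseteq C^{2\alpha'}(P)$ then lets me extend the chain so that the two Rips-like complexes $R^{\alpha+\delta}(P)$ and $R^{6(\alpha+\delta)}(P)$ both appear as intermediate terms in a single sequence of simplicial inclusions, with \v{C}ech-type complexes $C^{\alpha'}(P)$ sitting between them at scales $\alpha'\in\{\alpha+\delta,\,2(\alpha+\delta),\,6(\alpha+\delta),\,12(\alpha+\delta)\}$.

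Next I would use Proposition~\ref{manifold-homotopy} to collapse every offset appearing in the chain to $\manifold$ at the level of homology. The hypothesis $\alpha+\delta \le \frac{1}{12}\cdot\frac{\cos 2\theta}{1+\cos 2\theta}$ together with $\manifold_{12(\alpha+\delta)}\cap M_\theta=\emptyset$ is calibrated precisely so that the largest offset scale that appears, $12(\alpha+\delta)$, still lies below the critical threshold of Proposition~\ref{manifold-homotopy}, and each smaller offset inherits the same hypotheses (both $\alpha'$-smallness and the $M_\theta$-avoidance, since $\manifold_{\alpha'}\subseteq\manifold_{12(\alpha+\delta)}$). Therefore every inclusion $\manifold_{\alpha'}\hookrightarrow\manifold_{\alpha''}$ in the chain induces an isomorphism $\homo_i(\manifold_{\alpha'})\cong\homo_i(\manifold_{\alpha''})\cong\homo_i(\manifold)$. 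Functoriality of $\homo_i$ then squeezes the composition $\homo_i(R^{\alpha+\delta}(P))\to\homo_i(R^{6(\alpha+\delta)}(P))$, which factors through $\homo_i(C^{2(\alpha+\delta)}(P))\to\homo_i(C^{6(\alpha+\delta)}(P))$, between two isomorphic copies of $\homo_i(\manifold)$; the standard rank inequality of Lemma~3.2 in \cite{CO08} then forces its rank to equal $\mathrm{rank}\,\homo_i(\manifold)$.

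The main---and really only---delicate point is the bookkeeping of constants: one has to verify that every scale $\alpha'$ that appears after the successive $\times 2$ and $\times 3$ blowups incurred by the $C\subseteq R\subseteq C^{2\cdot}$ sandwich and the interleaving~(\ref{eqn:C}) remains bounded above by $\frac{\cos 2\theta}{1+\cos 2\theta}$. The factor $\frac{1}{12}$ in the hypothesis is exactly the smallest one that accommodates the worst-case final scale $12(\alpha+\delta)$, and once this is checked uniformly along the chain, all the ingredients slot together without any further work.
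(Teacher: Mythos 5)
Your proposal is correct and follows essentially the same route as the paper's own argument: the offset/union-of-balls interleaving of Proposition~\ref{interleave-prop2}, the Nerve Lemma identification of $\cup P_{\alpha'}$ with $C^{\alpha'}(P)$, the \v{C}ech--Rips sandwich producing the chain through scales $\alpha+\delta,\,2(\alpha+\delta),\,6(\alpha+\delta),\,12(\alpha+\delta)$, and Proposition~\ref{manifold-homotopy} together with the Chazal--Oudot sandwich/rank lemma (Lemma~3.2 of \cite{CO08}) under the condition $12(\alpha+\delta)\le\frac{\cos 2\theta}{1+\cos 2\theta}$ with $\manifold_{12(\alpha+\delta)}\cap M_\theta=\emptyset$. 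The constant bookkeeping you flag is exactly what the paper checks, so no further work is needed.
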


\subsection{Computations for topology inference}
\begin{algorithm}[h!]
\floatname{algorithm}{Algorithm}
\caption{{{\sc LeanTopo}($P$)}}
\label{alg:leantopo}
\begin{algorithmic}[1]
        \STATE $\beta:=\frac{\pi}{5}$; ~~$\rho:= \frac{1}{26}\frac{\cos2\beta}{1+\cos2\beta}$; ~~$Q:={{\sc Lean}}(P,\beta,\rho)$; 
	\STATE Compute the complexes $R^{2\rho}(Q)$ and $R^{12\rho}(Q)$;
	\STATE Compute the persistence induced by the inclusion
$R^{2\rho}(Q) \rightarrow R^{12\rho}(Q)$. 
        \end{algorithmic}
\end{algorithm}
In step~3 of {\sc LeanTopo}, we compute the persistence
homology induced by the inclusion $R^{2\rho}(Q)\rightarrow R^{12\rho}(Q)$
where $\rho=\frac{1}{26}\frac{\cos2\beta}{1+\cos2\beta}$. 
When the parameter $\eps$ is sufficiently small and
$\beta=\frac{\pi}{5}$, we can find a value 
$\theta$ such that $\frac{\pi}{4}\ge \theta \ge \beta + \frac{3}{2}\sqrt{\eps} + \nu_\eps$ and
$2\rho = \frac{1}{13}\frac{\cos2\beta}{1+\cos2\beta} \le \frac{1}{12} \frac{\cos2\theta}{1+\cos2\theta}$. 
This is precisely what is  needed for the homology inference 
in Theorem \ref{thm:topoinf-general}. 
More specifically, recall 
by Eqn.~\ref{eqn12} in the proof of 
Theorem \ref{thm:sparsification}, the output sparsified set of points $Q$ 
is a $(\delta, L_{\frac{\pi}{5}})$-sample for
$\delta=\frac{6}{5}\rho$.
The algorithm implicitly sets $\alpha = 2\rho - \delta = \frac{4}{5}\rho$ such that $\alpha + \delta = 2\rho \le \frac{1}{12} \frac{\cos2\theta}{1+\cos2\theta}$ when $\eps$ is sufficiently small. 
Theorem~\ref{thm:topoinf-general} requires further that the offset
$\manifold_{\alpha'}:= h^{-1}_{L_{\frac{\pi}{5}}}[0,\alpha']$
is disjoint from $\thetaM$ for $\alpha'= 12(\alpha+\delta)$ which
we establish using the following proposition.
 
\begin{proposition}
Let $\alpha'\leq \frac{1}{1+\cos\theta+\eps}$ and $\theta$ be such that 
$\frac{\pi}{2}\geq \theta\geq \frac{\pi}{5}+\frac{3}{2}\sqrt\eps+\nu_\eps$ for
a sufficiently small $\eps\leq \frac{1}{8}\sin^2\theta$.
Then, 
$
\thetaM\cap \manifold_{\alpha'}=\emptyset.
$
\label{avoidM}
\end{proposition}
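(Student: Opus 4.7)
The plan is to show that every point $m$ of the $\theta$-medial axis satisfies the lower bound $h_L(m) \ge \frac{1}{1+\cos\theta+\eps}$, where $L = L_{\pi/5}$. Since the hypothesis gives $\alpha' \le \frac{1}{1+\cos\theta+\eps}$, this will imply $h_L(m) \ge \alpha'$, so $m \notin \manifold_{\alpha'} = h_L^{-1}[0,\alpha']$. The work is therefore to produce an explicit point of $L$ close to $m$ and control the distance, which is essentially Proposition~\ref{med-approx} specialized to the query point $x = m$ itself.

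Concretely, I would begin by fixing $m \in M_\theta$ and invoking the definition of $M_\theta$ to obtain $s, t \in \Pi(m) \subset \manifold$ with $\angle smt \ge 2\theta$ and $d(s,m) = d(t,m) = d(m,\manifold)$. Next, I would reuse the construction inside the proof of Proposition~\ref{med-approx}: let $p, q \in P$ be the input sample points closest to $s, t$. The $\eps$-sampling condition together with the angle assumption $\tfrac{\pi}{2} \ge \theta \ge \tfrac{\pi}{5} + \tfrac{3}{2}\sqrt{\eps} + \nu_\eps$ are exactly what is used in Appendix~\ref{appendix:A} to guarantee that $(p,q)$ is a $\beta$-good pair for $\beta = \pi/5$, so its midpoint $\tfrac{p+q}{2}$ lies in $L_\beta = L$.

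The third step is a direct triangle inequality estimate:
\begin{equation*}
d\!\left(m, \tfrac{p+q}{2}\right) \;\le\; d\!\left(m, \tfrac{s+t}{2}\right) + d\!\left(\tfrac{s+t}{2}, \tfrac{p+q}{2}\right) \;\le\; \cos\theta\cdot d(s,m) + \eps\cdot d(s,m) \;=\; (\cos\theta + \eps)\, d(m, \manifold),
\end{equation*}
where the first summand uses $|sm|=|tm|=d(m,\manifold)$ together with $\angle smt \ge 2\theta$, and the second uses $d(p,s) \le \eps\,\lfs(s) \le \eps\, d(s,m)$ and the analogous bound for $q$, noting $\lfs(s),\lfs(t) \le d(s,m)=d(m,\manifold)$ because $m \in M \supseteq M_\theta$. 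Consequently $d(m, L) \le (\cos\theta + \eps)\,d(m,\manifold)$, and the definition of $h_L$ yields
\begin{equation*}
h_L(m) \;=\; \frac{d(m,\manifold)}{d(m,\manifold) + d(m,L)} \;\ge\; \frac{1}{1 + \cos\theta + \eps} \;\ge\; \alpha'.
\end{equation*}

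The only subtle point is that $M_\theta \cap \manifold_{\alpha'} = \emptyset$ requires the strict inequality $h_L(m) > \alpha'$. This is easily obtained because the approximation $d(p,s) \le \eps\,\lfs(s)$ is generically strict (as $s \notin P$) and the bound $\lfs(s)\le d(s,m)$ is strict unless $m$ is itself the medial-axis point realizing $\lfs(s)$; handling the degenerate case amounts to an infinitesimal perturbation argument. The main technical obstacle is simply re-verifying that the $\beta$-good construction from Proposition~\ref{med-approx} applies verbatim under the angle hypothesis with $\beta=\pi/5$; everything else is routine triangle-inequality bookkeeping.
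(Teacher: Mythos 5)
Your argument is correct and is essentially the paper's own proof in contrapositive form: the paper assumes a point $x\in \thetaM\cap\manifold_{\alpha'}$ and combines the same bound from Proposition~\ref{med-approx} (built on the $\beta$-good pair of Appendix~\ref{appendix:A} with $\beta=\pi/5$, under exactly the angle and $\eps$ hypotheses you cite) with $d(x,\manifold)\le\frac{\alpha'}{1-\alpha'}d(x,L)$ to reach a contradiction, which is precisely your direct lower bound $h_L(m)\ge\frac{1}{1+\cos\theta+\eps}$ on $\thetaM$ rearranged. The borderline non-strictness at $\alpha'=\frac{1}{1+\cos\theta+\eps}$ that you flag is shared by the paper's proof (its constant $c$ degenerates exactly at that value), and it is harmless downstream since the $\alpha'$ used in Theorem~\ref{thm:topoinfo-alg} satisfies the strict inequality, so this is not a gap relative to the paper's own argument.
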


\begin{proof}
We prove the result by contradiction. Assume that there exists a point $x\in M_\theta\cap  \manifold_{\alpha'}$. 
Define $m$ and $s$ as in the proof of Proposition \ref{med-approx}. 
With $\beta=\frac{\pi}{5}$,
the assumed conditions for $\theta,\beta,\eps$ are same
as in Proposition~\ref{med-approx}, and thus we can arrive at the
inequality~\ref{eqn:A} in its proof.
Since $s$ is a closest point of $m$ in $\manifold$,
we have $d(s,m) = d(m, \manifold) \leq d(x,m)+ d(x,\manifold)$.
Combining this with Eqn (\ref{eqn:A}), it follows that, 
for any $x\in \manifold_{\alpha'}$,
$$
d(x,L_{\frac{\pi}{5}})\leq
(1+\cos\theta+\eps) \cdot d(x, m)+ (\cos\theta+\eps) \cdot d(x,\manifold).
$$
Since $x \in \manifold_{\alpha'}$, $h_{L_{\frac{\pi}{5}}}(x) \le \alpha'$ implies
that $d(x,\manifold)\leq \frac{\alpha'}{1-\alpha'}d(x,L_{\frac{\pi}{5}})$.
Hence $d(x, L_{\frac{\pi}{5}}) \le c \cdot d(x, m) = c \cdot d(x, M_\theta)$ for the positive constant $c = \frac{1+\cos \theta + \eps}{1 - \frac{\alpha'}{1-\alpha'} (\cos \theta + \eps)} = 1 + \frac{\cos \theta + \eps}{1 - \alpha'(1 +\cos \theta + \eps)}$.

On the other hand, since $x\in M_\theta\cap  \manifold_{\alpha'}$ and 
since $\manifold \cap M_\theta$ is empty, $x\not\in \manifold$.
Thus, $d(x,\manifold)>0$. 
Since $x \in \manifold_{\alpha'}$
and $h_{L_{\frac{\pi}{5}}} (x) \le \alpha'$, we have
that $d(x,L_{\frac{\pi}{5}})\geq \frac{1-\alpha'}{\alpha'} d(x, \manifold)$.
Hence $d(x, L_{\frac{\pi}{5}}) > 0$ as well since $\alpha' < 1$.
This further implies that $d(x,M_\theta)>0$ because
according to the above derivation, 
$d(x,M_\theta) \ge \frac{1}{c} d(x,L_{\frac{\pi}{5}})$ for $c>0$.
This however contradicts the fact that $x\in M_\theta \cap \manifold_{\alpha'} \in M_\theta$. 
Hence our assumption is wrong and there is no such point $x\in M_\theta\cap  \manifold_{\alpha'}$. 
\end{proof}

\begin{theorem}
Let $\manifold\subset \mathbb{R}^k$ be a smooth compact manifold 
without boundary of known intrinsic dimension.
Let $P$ be an $\eps$-dense sample of $\manifold$ w.r.t. $\lfs$. 
{\sc LeanTopo}$(P)$ computes the 
rank of $\homo_i(\manifold)$ for any $i \ge 0$ when $\eps$ 
is sufficiently small. 
\label{thm:topoinfo-alg}
\end{theorem}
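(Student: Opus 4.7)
The plan is to reduce Theorem~\ref{thm:topoinfo-alg} to an application of Theorem~\ref{thm:topoinf-general} with the lean set $L=L_{\pi/5}$, sample $Q$ in place of $P$, and the parameters hardwired in {\sc LeanTopo}. Essentially all of the geometric and topological work has already been done in the earlier sections; what remains is a careful bookkeeping of constants to verify that the hypotheses of Theorem~\ref{thm:topoinf-general} and Proposition~\ref{avoidM} are simultaneously satisfiable for $\beta=\pi/5$ and $\rho=\tfrac{1}{26}\tfrac{\cos 2\beta}{1+\cos 2\beta}$, when $\eps$ is sufficiently small.

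First, I would verify the sampling side. Apply Theorem~\ref{thm:sparsification} to get that $Q={\sc Lean}(P,\beta,\rho)$ is $(\tfrac{4}{3}\rho,\rho)$-uniform with respect to $\lnfsp_{\pi/5}$; combined with the argument used in that theorem (the $1$-Lipschitz property of $\lnfsp$ together with triangle inequality from points of $\manifold$ to their nearest $Q$-witness), one obtains that $Q$ is a $(\delta,L_{\pi/5})$-sample in the sense of the definition preceding Proposition~\ref{interleave-prop2}, with $\delta=\tfrac{6}{5}\rho$ as noted in the paragraph preceding Proposition~\ref{avoidM}. Then set $\alpha:=2\rho-\delta=\tfrac{4}{5}\rho$ so that $\alpha+\delta=2\rho$ and $6(\alpha+\delta)=12\rho$; hence the complexes $R^{\alpha+\delta}(Q)$ and $R^{6(\alpha+\delta)}(Q)$ coincide with the complexes $R^{2\rho}(Q)$ and $R^{12\rho}(Q)$ actually constructed by the algorithm.

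Next, the choice of the auxiliary angle $\theta$. I need a $\theta$ satisfying simultaneously
\[
\tfrac{\pi}{4}\ge \theta \ge \beta+\tfrac{3}{2}\sqrt\eps+\nu_\eps,
\qquad
\alpha+\delta=2\rho\le \tfrac{1}{12}\tfrac{\cos 2\theta}{1+\cos 2\theta},
\qquad
12(\alpha+\delta)=24\rho\le \tfrac{1}{1+\cos\theta+\eps}.
\]
The first inequality is feasible because $\beta=\pi/5<\pi/4$ leaves a strictly positive gap, so for $\eps$ small (and $\nu_\eps=O(\eps)$ by Assumption~\ref{An-assumption}) there exists a valid $\theta$ arbitrarily close to $\beta$. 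For the second inequality, at $\theta=\beta$ the right-hand side equals $\tfrac{1}{12}\tfrac{\cos 2\beta}{1+\cos 2\beta}>\tfrac{1}{13}\tfrac{\cos 2\beta}{1+\cos 2\beta}=2\rho$, and by continuity this strict inequality persists for $\theta$ in a small neighborhood of $\beta$, which for small $\eps$ contains a valid $\theta$. For the third inequality, one checks numerically at $\theta=\pi/5$ that $24\rho = \tfrac{12}{13}\tfrac{\cos(2\pi/5)}{1+\cos(2\pi/5)}<\tfrac{1}{1+\cos(\pi/5)}$, so again continuity together with $\eps$ small gives room to pick $\theta$ satisfying all three bounds.

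Having fixed such $\theta$, Proposition~\ref{avoidM} applied with $\alpha'=12(\alpha+\delta)=24\rho$ yields $\thetaM\cap \manifold_{12(\alpha+\delta)}=\emptyset$, which is precisely the remaining hypothesis of Theorem~\ref{thm:topoinf-general}. That theorem then gives
\[
\rank \homo_i(\manifold)\;=\;\rank\bigl(\homo_i(R^{\alpha+\delta}(Q))\to \homo_i(R^{6(\alpha+\delta)}(Q))\bigr)\;=\;\rank\bigl(\homo_i(R^{2\rho}(Q))\to \homo_i(R^{12\rho}(Q))\bigr),
\]
which is the rank output by Step~3 of {\sc LeanTopo}$(P)$. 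The main (and really the only) obstacle is the three-way constant check in the previous paragraph: one must confirm that the specific value $\rho=\tfrac{1}{26}\tfrac{\cos 2\beta}{1+\cos 2\beta}$ leaves enough slack so that a single $\theta$ exists satisfying both the Theorem~\ref{thm:topoinf-general} inequality and the Proposition~\ref{avoidM} inequality, while still lying in the window $[\beta+\tfrac{3}{2}\sqrt\eps+\nu_\eps,\pi/4]$ mandated by Theorem~\ref{lnfs-thm}. All other ingredients—sparsification quality, interleaving, critical point avoidance—are furnished verbatim by the earlier sections.
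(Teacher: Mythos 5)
Your proposal is correct and follows essentially the same route as the paper: use Eqn.~(\ref{eqn12}) from Theorem~\ref{thm:sparsification} to see that $Q$ is a $(\delta,L_{\pi/5})$-sample with $\delta=\tfrac{6}{5}\rho$, set $\alpha=2\rho-\delta$, pick $\theta$ slightly above $\beta=\pi/5$ so that $2\rho\le\tfrac{1}{12}\tfrac{\cos 2\theta}{1+\cos 2\theta}$, invoke Proposition~\ref{avoidM} to get $\manifold_{12(\alpha+\delta)}\cap M_\theta=\emptyset$, and conclude via Theorem~\ref{thm:topoinf-general}. The only cosmetic difference is that you verify the hypothesis $\alpha'\le\tfrac{1}{1+\cos\theta+\eps}$ of Proposition~\ref{avoidM} by a direct numerical check at $\theta=\pi/5$, whereas the paper derives it from the general inequality $\tfrac{\cos 2\theta}{1+\cos 2\theta}\le\tfrac{1}{1+\cos\theta+\eps}$; both are valid.
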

\begin{proof}
Since $\frac{\cos 2\theta}{1+\cos 2\theta}\leq \frac{1}{1+\cos \theta+\eps}$
for $\theta\leq \frac{\pi}{2}$ and small enough $\eps$, 
one has the fact that $\alpha'= 12 (\alpha+\delta) \le \frac{\cos 2\theta}{1+\cos 2\theta}$
implies that $\alpha'\leq \frac{1}{1+\cos \theta +\eps}$.
This means that the parameters $\alpha,$ and $\theta$ set
by the algorithm {\sc LeanTopo} implicitly or explicitly
satisfy the conditions required by Proposition 
\ref{avoidM}. Hence, $\manifold_{12(\alpha+\delta)} \cap M_\theta = \emptyset$. 
Therefore, all conditions for Theorem \ref{thm:topoinf-general} hold
for the sparsified set $Q$ output by {\sc Lean}, and it then follows that $\mathrm{rank}(\homo_i(\manifold)) = \mathrm{rank}(\homo_i(R^{2\rho}(Q)) \to \homo_i(R^{12 \rho}(Q)))$, for any $i\ge 0$. 
\end{proof} 

We remark that a particular interesting feature of 
Algorithm {\sc LeanTopo} is that, we only need to 
set the parameter $\beta$ to a universal 
constant $\frac{\pi}{5}$. All other
parameters such as the angle and radius conditions for
choosing $\beta$-good pairs and the decimation radius 
are determined by this choice of the angle $\beta$. 
This makes {\sc LeanTopo} parameter-free; 
see also our experimental results in Section~\ref{sec:experiment}.
At the same time, the above Theorem states that its output is guaranteed to be correct as the input set of samples $P$ becomes sufficiently dense.

\section{Experiments and discussion}
\label{sec:experiment}
We experimented with {\sc LeanTopo} primarily on curve and surface samples.
We used thresholds for sparsification that are more aggressive than
predicted by our analysis. For example, our analysis predicts that
for $\beta=\frac{\pi}{5}$, the constant
$c_\beta=\frac{1}{3}\tan \frac{\beta}{2}\approx 0.11$, but
we kept it at $0.5$.
We kept the same thresholds for all
models to ensure that we don't fine tune it for different input.
The decimation ratio $\frac{r_q}{d(q,L_\beta)}$ is kept at $0.5$,
and the $r$ for
computing the complex $R^r$ is kept at $0.65$ in all cases.
Table~\ref{tbl:experiment} below shows the details. The rank of $H_1$ homology is computed 
correctly by our algorithm for all these data. The sparsified points are shown in Figure \ref{experiment}.

\begin{table*}[htbp]
\begin{center}
\begin{tabular}{|c|c|c|c|c|c|c|}
\hline
Name & input \#points & output \#points & $c_\beta$ & decimation ratio & $r$ for $R^r$ & ${\mathrm rank}\, H_1$ \\
\hline
{\sc MotherChild} & 126500 & 5267 & 0.5 & 0.5 & 0.7 & 8 \\
\hline
{\sc Botijo} & 101529 & 7600 & 0.5 & 0.5 & 0.7 & 10 \\
\hline
{\sc Kitten} &  134448 & 1914 & 0.5 & 0.5 & 0.7 & 2 \\
\hline
{\sc CurveHelix} & 1000 & 235 & 0.5 & 0.5 & 0.7 & 1 \\
\hline
\end{tabular}
\end{center}
\caption{Experiments on a curve and three surface samples.}
\label{tbl:experiment}
\end{table*}

\paragraph*{Extensions.}
One obvious question that remains open is how to
extend the scope of our sparsification strategy to
larger class of input, such as noisy data samples and/or samples
from compact spaces rather than manifolds. 

{\bf Noise:} We observe that, for Hausdorff noise, where
samples are assumed to lie within a small offset of the manifold, 
our method can be applied.
However, a parameter giving the extent
of this Hausdorff noise needs to be supplied. With this
parameter, one can estimate the normals reliably from the noisy
but dense sample. The step where
we compute the lean set, requires an empty ball test which
also needs this parameter because otherwise noise can collaborate
to provide a false impression that some spurious manifolds have been sampled.
Given the ambiguity that a noisy sample can be dense for
two topologically different spaces, it may be impossible to avoid
a parameter that eliminates different such possibilities. Nevertheless,
our method would free the user from specifying a threshold for
building the complexes.

In an experiment, we added artificial noise on the three surface samples 
as shown in Figure~\ref{fig:noise} to test robustness of our algorithm. 
We added a uniform displacement to each sample point along
the normal direction. The displacement ranged from $-0.5\%$ to $0.5\%$ times 
the diameter of the model. We modified our algorithm to ignore all leanset 
points formed by two points closer than a threshold which is picked as a
multiple of the diameter of the model. Other thresholds were kept the same 
as in the previous experiment. Results in Table~\ref{tbl:noise-scale}
show that the algorithm can tolerate 
noise in case there is a known upper limit on the noise level.
\begin{table*}
        \begin{center}
                \begin{tabular}{|p{3.5cm}|c|c|c|c|c|c|c|c|c|c|c|c|c|c|}
                        \hline
                        {Threshold (multiple of noise scale)}&  0 & 1 & 2 & 3 & 4 & 5 & 6 & 7 & 8 & 9 & 10 & 11 & 12 & 13 \\
                        \hline
                        {\sc MotherChild} &
                        18196 & 1636 & 37 & 8 & 8 & 8 & 8 & 8 & 8 & 8 & 8 & 8 &
                        7 & 7\\ \hline
                        {\sc Botijo} &
                        14565 & 14580 & 1462 & 10 & 10 & 10 & 10 & 10 & 10 & 10 & 10 & 8 & 8 & 8 \\ \hline
                        {\sc Kitten} &
                        20506 & 20572 & 1314 & 2 & 2 & 2 & 2 & 2 & 2 & 2 & 2 & 2 & 2 & 2 \\ \hline
                \end{tabular}
        \end{center}
        \caption{Experiments on 3 surfaces with artificial noise. The table shows resulting $H_1$ of each model under different threshold. Experiments show that the influence of noise is removed when we pick threshold greater than or equal to 3 times of the noise scale. The threshold might introduce problem when it is too large. }
        \label{tbl:noise-scale}
\end{table*}

\begin{figure*}[ht!]
        \begin{center}
        \includegraphics[width=0.95\textwidth]{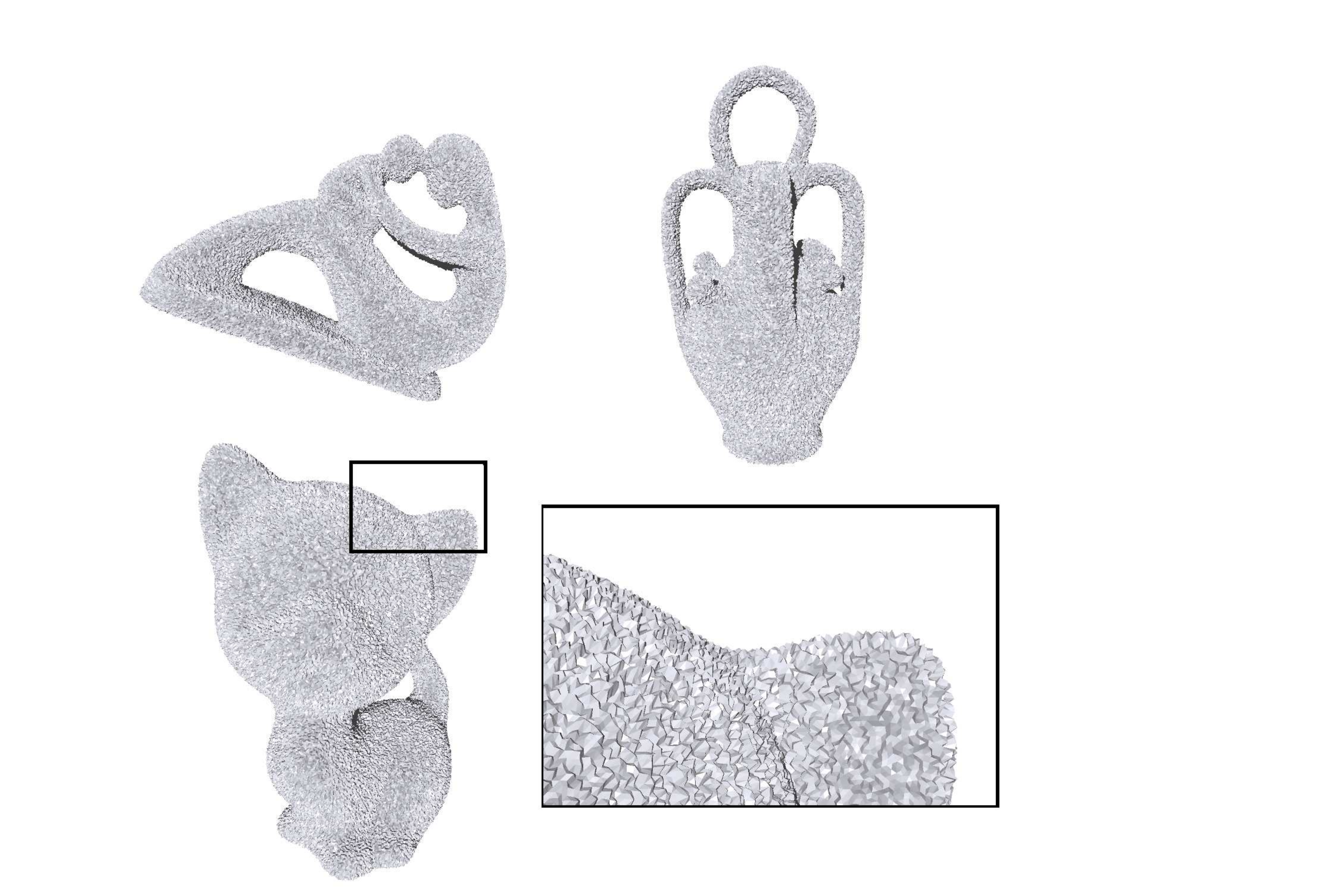}
        \end{center}
        \caption{Noisy samples. Meshes are created only for rendering.
        }
        \label{fig:noise}
\end{figure*}

The more general noise model which allows outliers would also be
worthwhile to investigate.
One may explore the `distance to measure'
technique proposed in~\cite{CCM11} for this case. But, it is not clear
how to adapt the entire development in this paper to this setting.
One possibility is to eliminate all outliers first to make the
noise only Hausdorff, and then apply the technique for Hausdorff
noise as alluded in the previous paragraph.
This will certainly require more parameters to be supplied by the user. 

{\bf Compacts:}
The case for compact sets is perhaps more challenging. The normal spaces
are not well defined everywhere for such spaces. Thus, we need to devise
a different strategy to compute the lean set. The theory of compacts
developed in the context of topology inference in~\cite{CCL09} may be
useful here. Computing the lean sets efficiently
in high dimensions for compact spaces remain a formidable open problem. 

\section*{Acknowledgment}
This work was partially supported by the NSF grants
CCF-1064416, CCF-1116258, CCF 1318595, and CCF 1319406.

\newpage 

\appendix
\section{Missing Proofs}
\label{appendix:A}
\paragraph{Proving that $(p,q)$ is $\beta$-good for 
Proposition~\ref{med-approx}.}
We know that $\angle{(\nor_s, st)} \leq \pi/2-\theta$ which implies 
that $d(s,t)\geq 2d(s,m)\sin\theta\geq 2\lfs(s)\sin\theta$.
Consider the triangle $pst$. By triangle inequality, $d(p, t) \ge d(s, t) - d(s, p) \ge (2\sin \theta - \eps) \lfs(s)$.
The angle $\angle{pts}$ is at most
\begin{equation}
\arcsin \frac{d(p,s)}{d(p,t)}\leq \arcsin \frac{\eps \lfs(s)}{(2\sin\theta - \eps) \lfs(s)} \leq \frac{4}{3} \cdot \frac{\eps}{2\sin\theta - \eps}.
\label{eqn:smallangle1}
\end{equation}
The last inequality follows from that
$\arcsin (x) \le cx$ for $x \le \frac{\sqrt{c^2-1}}{c}$. In our case, choose $c = \frac{4}{3}$. Since $\sqrt{\eps} \le \frac{\sin \theta}{2\sqrt{2}} \le \frac{1}{2}$, we have that
$$\frac{\eps}{2\sin \theta - \eps} \le  \frac{\eps}{4 \sqrt{\eps} - \eps} = \frac{\sqrt{\eps}}{4 - \sqrt{\eps}} \le  \frac{1}{7} \le \frac{\sqrt{c^2-1}}{c}. $$
Now assume without loss of generality that
$\lfs(s) \geq \lfs(t)$. Then,
$$d(p,q) \geq d(s,t)- d(p,s)-d(q,t) \geq d(s,t)-2\eps \lfs(s)\geq
2(\sin\theta-\eps)\lfs(s).
$$
Recall that $d(t, p) \ge (2\sin \theta - \eps)\lfs(s)$. Considering the triangle $tpq$, we have
\begin{equation}
\angle{tpq}\leq \arcsin\frac{d(q,t)}{d(p,t)}\leq
\arcsin\frac{\eps \lfs(t)}{2(\sin\theta-\eps)\lfs(s)}
\leq \arcsin\frac{\eps \lfs(s)}{2(\sin\theta-\eps)\lfs(s)}
\leq \frac{4}{3} \cdot \frac{\eps}{2(\sin\theta-\eps)},
\label{eqn:smallangle2}
\end{equation}
where the last inequality follows from a similar argument used for Eqn. (\ref{eqn:smallangle1}).

We know that,
$
\angle{(\nor_p,\nor_s)}\leq \eps
$,
$
\angle{(\vv_p,\nor_p)}\leq \nu_{\eps},
$
and $\angle{(pq, st)} \le \angle pts + \angle tpq$.
Combining these with Eqn. (\ref{eqn:smallangle1}), (\ref{eqn:smallangle2}) and the assumption that $\sqrt{\eps}\le \frac{1}{2\sqrt{2}}\sin \theta (\le \frac{1}{2})$, we have that
\begin{align*}
\angle{(pq,\vv_p)} &\leq \angle{(pq,st)}+ \angle{(st,\nor_s)}+\angle{(\nor_s,\nor_p)} +
 \angle{(\nor_p,\vv_p)}
\leq \frac{8}{3}\cdot \frac{\eps}{2\sin\theta - 2\eps}+\frac{\pi}{2}-\theta+\eps +\nu_\eps \\
&\leq \frac{8}{3} \cdot \frac{\sqrt{\eps}}{4\sqrt{2} - 2 \sqrt{\eps}} + \frac{\sqrt{\eps}}{2} + \frac{\pi}{2} - \theta + \nu_\eps \le \frac{\pi}{2} - \theta + \frac{3}{2} \sqrt{\eps} + \nu_\eps
\end{align*}
Similar bound holds for $\angle{(pq,\vv_q)}$.
It follows that the pair $(p,q)$ satisfies the
first condition of being $\beta$-good, as long as $\frac{\pi}{2}-\theta +
\frac{3}{2}\sqrt{\eps} +\nu_\eps\leq \frac{\pi}{2}-\bb$. This is guaranteed by requiring $\theta\geq \bb + \frac{3}{2} \sqrt\eps +\nu_\eps$ (as specified in the proposition).


\parpic[r]{\includegraphics[height=4cm]{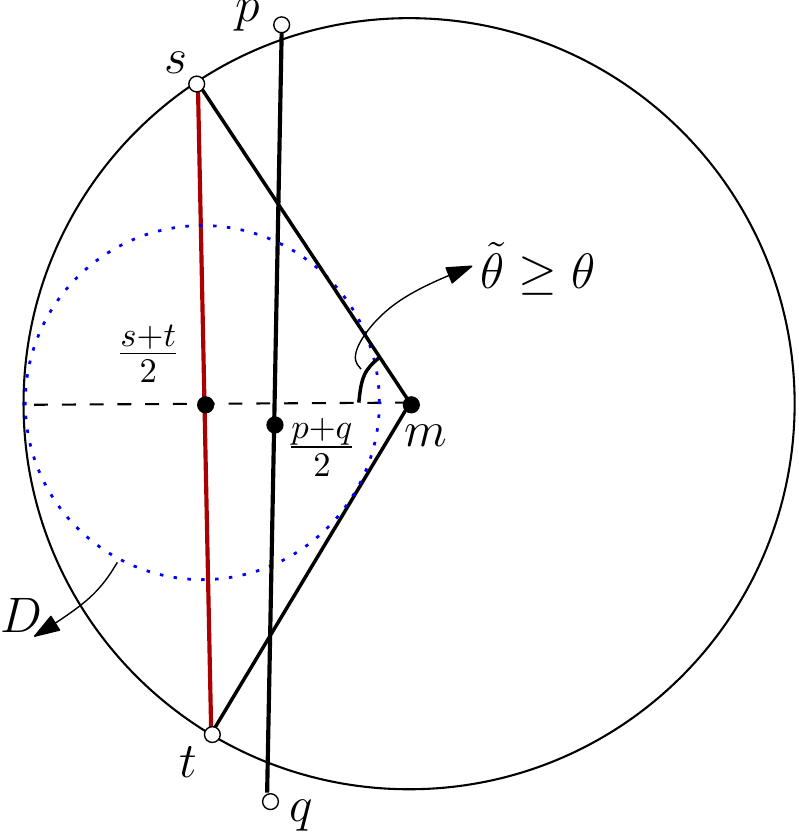}}
Next, we argue that $(p,q)$ also satisfies the second condition of being
$\beta$-good.
To do so, let $\newtheta = \frac{1}{2} \angle smt$ be half of the angle spanned by $sm$ and $tm$. Note that by the definition of $\theta$-medial axis $M_\theta$, we have that $\newtheta \ge \theta$.
See the right figure for an illustration.
First, observe that the ball $D = B(\frac{s+t}{2}, r)$ with $r = d(s,m) \cdot (1-\cos \newtheta)$ does not intersect $\manifold$, since this ball is contained inside the medial ball $B(m, d(s,m))$.
The midpoint $\frac{p+q}{2}$ of $pq$
is at most $\eps \lfs(s) \le \eps d(s,m)$ distance away from
$\frac{s+t}{2}$ because both $p$ and $q$ are at most $\eps \lfs(s)$
away from $s$ and $t$ (assuming w.o.l.g $\lfs(s)\geq \lfs(t)$).
This means that the ball $D'=B(\frac{p+q}{2},r')$ centering at the midpoint of $pq$ and with radius
$r'=d(s,m) \cdot (1-\cos\newtheta -\eps)$ is contained in the ball $D$ and thus
does not have any point of $\manifold$ and hence $P$ inside.

On the other hand, note that
$$d(p, q) \le d(s,t) + 2\eps \lfs(s) \le 2 d(s,m) \sin \newtheta + 2\eps d(s,m) = 2d(s,m) (\sin \newtheta + \eps). $$
Thus,
the second condition for
$p,q$ being a good pair is satisfied as long as
$$
c_\bb\leq\frac{1-\cos\newtheta-\eps}{2(\sin\newtheta+ \eps)} \leq \frac{r'}{d(p,q)}.
$$
Consider the function $f(x) = \frac{1-\cos x-\eps}{\sin x+ \eps}$, its derivative $f'(x)$ is greater than $0$ for $x \in [0, \pi/2]$. Indeed,
$$
f'(x) = \frac{\sin x \cdot (\sin x + \eps) - (1- \cos x - \eps)\cdot \cos x}{(\sin x + \eps)^2} = \frac{1 - \cos x + \eps \sin x + \eps \cos x}{(\sin x + \eps)^2} \ge 0.
$$
Hence $f(x)$ is an increasing function, and $f(\newtheta) \ge f(\theta)$ since $\newtheta \ge \theta$. In other words, the second condition for $(p,q)$ being a good pair is satisfied as long as $c_\bb \le \frac{1-\cos\theta-\eps}{2(\sin\theta+ \eps)}$.
To further simplify it, note that using $\eps \le \frac{1}{8}\sin^2 \theta$, one can show that $\frac{4\eps}{\sin \theta} \le  \tan \frac{\theta}{2}$.
Combining this with $\frac{1-\cos \theta}{\sin \theta} = \tan \frac{\theta}{2}$, we then have $$
\frac{1-\cos\theta-\eps}{2(\sin\theta+\eps)}\geq
\frac{1-\cos\theta-\eps}{\frac{9}{4}\sin\theta}=\frac{4}{9}\tan\frac{\theta}{2}
-\frac{4\eps}{9\sin\theta} \geq \frac{4}{9}\tan\frac{\theta}{2} - \frac{1}{9} \tan \frac{\theta}{2} = \frac{1}{3} \tan\frac{\theta}{2} \geq \frac{1}{3} \tan\frac{\bb}{2}.
$$
Hence as $c_\bb \le\frac{1}{3}\tan\frac{\bb}{2}$, the ball $B(\frac{p+q}{2}, c_\bb d(p,q))$ is contained in $D'$ and thus contains no point in $P$.
Therefore, the pair $(p,q)$ is $\beta$-good and its midpoint is in $L_\bb$.

\paragraph{Proof of Theorem~\ref{thm:sparsification}.}
Let $x$ be any point in $\manifold$ to which $p$ is the nearest
sample point in $P$. Then, $d(x,p)\leq \eps \lfs(x)\leq \eps'\lfs(p)$
where $\eps'=\frac{\eps}{1-\eps}$.
If $p$ is retained in $Q$, $d(x,Q)\leq \eps \lfs(x)\le\eps'\lfs(p) \le \frac{\eps'}{\cone} d(p, L_\bb) \le \frac{6\rho}{5} d(p, L_\bb)$ for sufficiently small $\eps > 0$, where $\cone$ is the constant from Proposition \ref{lb-to-med}.
Now consider the case when $p$ is deleted while processing another point, say $q \in P$.
By the decimation procedure in lines 5--9, $d(q, L_\bb) \ge d(p, L_\bb)$ and $q$ will remain in $Q$ since we process points in non-decreasing
order of their distances to $L_\beta$.
Using Proposition \ref{lb-to-med}, we then have:
\begin{align*}
d(x, q) &\le d(x,p) + d(p, q) \le \eps \lfs(x) + \rho d(q, L_\bb) \le \eps' \lfs(p) + \rho d(q, L_\bb) \\
&\le \frac{\eps'}{\cone} d(p, L_\bb) + \rho d(q, L_\bb) \le (\frac{\eps'}{\cone} + \rho) d(q, L_\bb) \le \frac{6\rho}{5} d(q, L_\bb).
\end{align*}
The last inequality holds when $\eps$ is sufficiently small (in which case the estimation error $\nu_\eps$ in the normal space is also small).
Therefore,
\begin{eqnarray}
\label{eqn12}
d(x,q) \leq \frac{6\rho}{5}\lnfs(q)
\end{eqnarray}
Now applying Remark~\ref{lnfs-remark}, $Q$ is also $(\frac{4}{3}\rho)$-dense
because $\frac{\frac{6\rho}{5}}{1-\frac{6\rho}{5}}\leq \frac{4}{3}\rho$
for $\rho\le \frac{1}{12}$.

The fact that $Q$ is $\rho$-sparse w.r.t. $\lnfs$ follows easily from the 
decimation procedure.

%
%

\section{Estimation of Normal/Tangent Space}
\label{appendix:normal}

\newcommand{\newsigma}	{\widehat{\sigma}}
\newcommand{\myDelta}		{D}
\newcommand{\myvol}		{\mathrm{Vol}}
\newcommand{\newphi}		{\beta}
   
Here, we provide the justification for the claimed bound of $O(\eps)$
on the tangent space estimation(and thus the normal space) of the 
hidden manifold $\manifold$ at a sample point $p \in P$. 
For completion, we restate the procedure described in
section~\ref{sec:computation} for estimating the tangent space $T_p$. 
Set $\beta=\frac{\pi}{5}$ for the calculations to follow.
Let $\inD$ denote the intrinsic dimension of the manifold $\manifold$, which we assume is known a-priori. 
Let $p_1 \in P$ be the nearest neighbor of $p$ in $P \setminus \{p\}$. 
Suppose we have already obtained points $\sigma_i = \{p, p_1, \ldots, p_i\}$ with $i < \inD$. 
Let $\aff(\sigma_i)$ denote the affine hull of the points in $\sigma_i$. 
Next, we choose $p_{i+1}\in P$ that is closest to $p$ among all points
forming an angle within the range $[\frac{\pi}{2} - \bb, \frac{\pi}{2}]$ 
with $\aff(\sigma_i)$. 
We add $p_{i+1}$ to the set and obtain $\sigma_{i+1} = \{p, p_1, \ldots, p_i, p_{i+1}\}$. 
This process is repeated until $i = \inD$, at which point we have obtained $\inD+1$ points $\sigma_\inD = \{p, p_1, \ldots, p_{\inD}\}$. 
We use $\aff(\sigma_\inD)$ to approximate the tangent space $T_p$. 
We now show that the simplex $\sigma_i$ is ``fat''. In particular, we will leverage a result (Corollary 2.6) of \cite{BG11} to bound the angle between the true tangent space $T_p$ and approximate tangent space $\aff(\sigma_i)$. 

More specifically, we first modify the simplex $\sigma_i$ to another one $\newsigma_i$ as follows. 
Let $\myDelta$ denote the longest length of any edge incident to $p$ 
in $\sigma_i$. 
Later we will prove that $\myDelta = O(\frac{\eps \lfs(p)}{\sin \newphi})$. 
Now, we extend each edge $pp_j$ along the same line segment but to $p\hat{p}_j$ such that $\| p \hat{p}_j \| = \myDelta$. 
The resulting simplex spanned by $\{p, \hat{p}_1, \ldots, \hat{p}_i \}$ is denoted by $\newsigma_i$. 
By construction, $\aff(\sigma_i) = \aff(\newsigma_i)$. 
Hence, we only need to bound the angle $\angle (T_p, \aff(\newsigma_i) )$. 
Corollary 2.6 of \cite{BG11} states that $\sin \angle( \aff(\newsigma_i), T_p) \le \frac{L^{i+2}}{\myvol(\newsigma_i) \cdot S \cdot \lfs(p)}$, where $L$ and $S$ are the longest and shortest edge length of $\newsigma_i$ respectively; while $\myvol(\newsigma_i)$ stands for the volume of the simplex  $\newsigma_i$. 
To use this result, we bound the terms $L, S$, and $\myvol(\newsigma_i)$. 

\parpic[r]{\includegraphics[height=3.5cm]{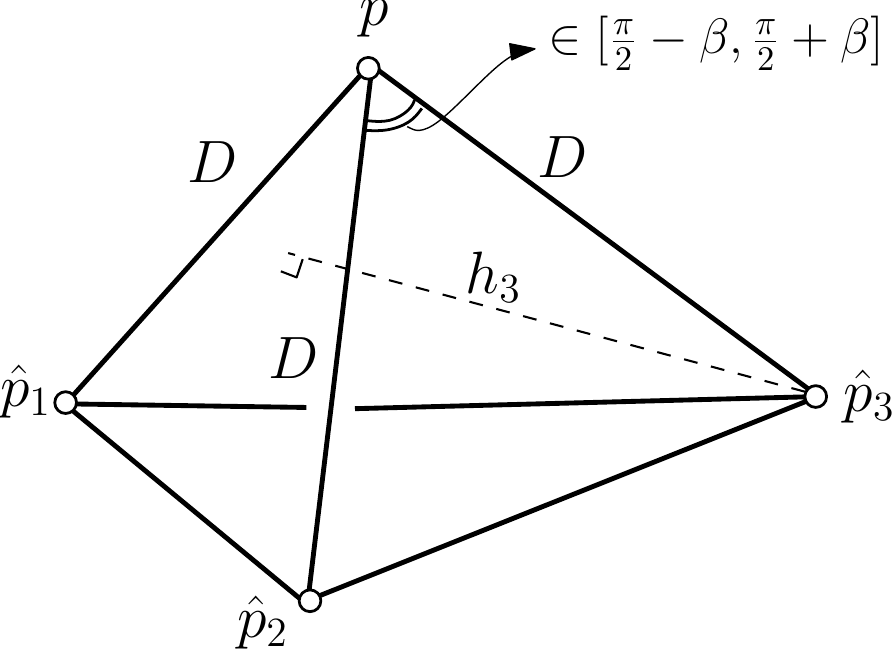}}
See the figure on right for an illustration. 
First, we bound the angle between any two $p \hat{p}_{\ell}$ and $p\hat{p}_j$, for $\ell, j \in [1, i]$. 
Assume w.o.l.g. that $j > \ell$. 
By construction, $p \hat p_j$ forms an angle $\alpha$ such that $\alpha \in [\frac{\pi}{2} - \newphi, \frac{\pi}{2}]$ with $\aff(\sigma_{j-1})$. 
It follows that $\alpha \le \angle (p \hat p_{\ell}, p\hat p_j) \le \pi - \alpha$, that is, $\angle (p \hat p_{\ell}, p\hat p_j) \in  [\frac{\pi}{2} - \newphi, \frac{\pi}{2} + \newphi]$. 
Therefore, the edge length $d(\hat p_{\ell}, \hat p_j)$ satisfies 
$$d(\hat p_{\ell}, \hat p_j ) = 2\myDelta \cdot \sin \frac{1}{2} \angle (p \hat p_{\ell}, p\hat p_j) \in [2\myDelta \cdot \sin(\frac{\pi}{4} - \frac{\newphi}{2}) , 2\myDelta \cdot \sin(\frac{\pi}{4} + \frac{\newphi}{2})]. $$
Therefore the longest edge length $L$ in simplex $\newsigma_i$ is at most $L \le 2\myDelta \cdot \sin(\frac{\pi}{4} + \frac{\newphi}{2})$, while the smallest edge length $S$ in simplex $\newsigma_i$ is at least $S \ge \min\{ D, 2\myDelta \cdot \sin(\frac{\pi}{4} - \frac{\newphi}{2}) \}$. 

Next, we bound the volume $\myvol(\newsigma_i)$ of $\newsigma_i$, which we do inductively. 
We claim that $\myvol(\newsigma_i) \ge \frac{D\cdot (D\cdot \cos \newphi)^{i-1}}{i !}$. 
This claim holds when $ i = 1$ in 
which case $\myvol(\newsigma_1) = d(p,\hat p_1) = D$. 
Assume it holds for $i-1$. 
Then, we have that $\myvol(\newsigma_i) = \frac{1}{i} d(\hat p_i, \aff(\newsigma_{i-1})) \cdot \myvol(\newsigma_{i-1})$, where $h_i = d(\hat p_i, \aff(\newsigma_{i-1}))$ is the height of the simplex $\newsigma_i$ using $\newsigma_{i-1}$ as the base facet.
On the other hand, by construction $\angle(p \hat p_i, \aff(\newsigma_{i-1}) \ge \frac{\pi}{2} - \newphi$, which gives
$$h_i = d(p, \hat p_i)  \cdot \sin \angle(p \hat p_i, \aff(\newsigma_{i-1})) \ge \myDelta \cdot \cos \newphi. $$
It follows that $\myvol(\newsigma_i) \ge \frac{\myDelta}{i} \cdot \cos \newphi \cdot \myvol(\newsigma_{i-1}) \ge \frac{ D\cdot (D\cdot \cos \newphi)^{i-1}}{i !} $, which then proves the claim inductively.

Now we derive an upper bound on $D$. Inductively, assume that for $1\leq i \leq s$,
$$
D\leq 13\eps\lfs(p) \mbox{ and } 
\theta_i=\angle{(\aff(\sigma_i),T_p)}\leq \arcsin\left(\frac{i!2^{i+2} D \sin^{i+2}(\frac{\pi}{4}+\frac{\beta}{2})}{\cos^{i-1}\beta \sin(\frac{\pi}{4}-\frac{\beta}{2})\lfs(p)}\right).
$$ 
For $i=1$ and sufficiently small $\eps$, it is true because the nearest 
point $p_1$ to $p$ satisfies
$d(p,p_1)\leq 3\eps\lfs(p)$
and also $\sin\angle({pp_1,T_p})\leq \frac{3}{2}\eps$~(this follows 
easily from the $\eps$-dense sampling condition, see e.g. Corollary 3.1 and
Lemma 3.4~\cite{Dey07}) 
For induction consider
the time when we choose $p_i$. 
Consider the projection $\tilde{\sigma}_{i-1}$ of $\sigma_{i-1}$
onto $T_p$ and the $(i-1)$-dimensional affine subspace 
$\aff(\tilde{\sigma}_{i-1})$ of $T_p$ containing this projection.
By our inductive hypothesis, 
$\angle(\aff{\sigma_{i-1}},\aff{\tilde{\sigma}_{i-1}})\leq \theta_{i-1}$.
Let $F$ be the subspace of $T_p$ orthogonal to $\aff{\tilde{\sigma}_{i-1}}$
and let $x\in F$ be such that
$d(x,p)=10\eps\lfs(p)$. The closest point $\tilde{x}\in\manifold$ of $x$
to $\manifold$ has $d(x,\tilde x)=O(\eps^2\lfs(p))$. Therefore,
we can assume that 
$$9\eps\lfs(p)\leq d(p,\tilde x)\leq 11\eps\lfs(p)$$ when
$\eps>0$ is sufficiently small. There is a sample point $p'\in P$
with $d(\tilde x, p')\leq \eps \lfs(\tilde x)$. This means that
the angle $\angle{p\tilde x, pp'}$ is at most 
$\arcsin(\frac{\eps\lfs(x)}{9\eps\lfs(p)})=\arcsin \frac{1}{8}$ when
$\eps$ is sufficiently small. It follows that  
$\angle{(pp',\aff(\sigma_{i-1}))}\geq \frac{\pi}{2}-\arcsin\frac{1}{8}
-\theta_{i-1}$.
One can make $\theta_{i-1}$ arbitrarily small by choosing
$\eps$ sufficiently small. Therefore,
if $\beta=\frac{\pi}{5}$ and $\eps$ is small enough, we have 
$\angle{pp',\aff(\sigma_{i-1})}\in [\frac{\pi}{2}-\beta,\frac{\pi}{2}]$.
Since $p_i$ is chosen with the smallest distance from $p$ satisfying
the above angle condition, we have, for small enough $\eps$, 
$$d(p,p_i)\leq d(p,p')\leq d(p,\tilde x)+d(\tilde x,p')\leq 11\eps\lfs(p)+\eps\lfs(x)\leq 13\eps\lfs(p).$$
Since $D$ cannot be larger than the maximum between older $D$
from stage $i-1$ and $d(p,p_i)$, one has $D\leq 13\eps\lfs(p)$.
Combining all these with Corollary 2.6 of \cite{BG11}, we obtain that 
$\sin \angle(\aff(\newsigma_i), T_p)= \sin\theta_i$ as claimed. 

Evaluating $\sin\theta_i$ we obtain $\sin\theta_i=O(\frac{D}{\lfs(p)})=O(\eps)$
for all $i\in[1,s]$ 
where the big-O notation hides constants depending exponentially on the intrinsic dimension $\inD$ and $\cos\newphi$. 
In other words, the angle $\nu_\eps$ between the approximate tangent space and the true tangent space (thus between the approximate normal space and the true normal space) at any sample point is bounded by $O(\eps)$, where the big-O notations hides constant depending on the angle $\beta$ and intrinsic dimension $\inD$ of the manifold $\manifold$. 

\end{document}